\def\qu#1 {\fbox {\footnote {\ }}\ \footnotetext { From Qu: {\color{red}#1}}}
\def\hqu#1 {}
\def\kq#1 {\fbox {\footnote {\ }}\ \footnotetext { From KangQuan: {\color{blue}#1}}}
\def\hkq#1 {}
\newcommand{\mkq}[1]{{{\color{blue}#1}}}
\newtheorem{Th}{Theorem}[section]
\newtheorem{Cor}[Th]{Corollary}
\newtheorem{Prop}[Th]{Proposition}
\newtheorem{Prob}{Problem}
\newtheorem{Lemma}[Th]{Lemma}
\newtheorem{Def}{Definition}
\newtheorem{example}{Example}
\newtheorem{Rem}{Remark}
\newcommand{\tr}{{\rm Tr}}
\newcommand{\gf}{{\mathbb F}}
\newcommand{\figcaption}{\def\@captype{figure}\caption}
\newcommand{\tabcaption}{\def\@captype{table}\caption}
\begin{document}
	\title{
		On the Differential Linear Connectivity Table \\ of Vectorial Boolean Functions}
	\author{{Kangquan Li, Chunlei Li, Chao Li and Longjiang Qu}
	\thanks{Kangquan Li, Longjiang Qu and Chao Li are with the College of Liberal Arts and Sciences,
		National University of Defense Technology, Changsha, 410073, China.
		Longjiang Qu is also with the State Key Laboratory of Cryptology, Beijing, 100878, China. Chunlei Li is with the Department of Informatics, University of Bergen, Bergen N-5020, Norway.
		E-mail: likangquan11@nudt.edu.cn, chunlei.li@uib.no, lichao\_nudt@sina.com, ljqu\_happy@hotmail.com.
	}}

	\maketitle
	
\begin{abstract}

Vectorial Boolean functions are crucial building blocks in symmetric ciphers. 
Different known attacks on block ciphers have resulted in diverse cryptographic criteria of vectorial Boolean functions,
such as differential distribution table and nonlinearity.  
Very recently, Bar-On et al. introduced at Eurocrypt'19 \cite{DLCT2019}  a new tool, called 
the Differential-Linear Connectivity Table (DLCT), which allows for taking
into account the dependency between the two subciphers $E_0$ and $E_1$,
and leads to significant improvements of differential-linear cryptanalysis attacks on ciphers ICEPOLE and 8-round DES.
This paper is a follow-up work of \cite{DLCT2019}, which presents further theoretical characterization of the DLCT of vectorial
Boolean functions and also investigates this new criterion of functions with certain forms.

{ In this paper we introduce a generalized concept of the additive autocorrelation, which is extended from Boolean functions to the vectorial Boolean functions, and 
use it as a main tool to investigate the DLCT property of vectorial Boolean functions.}
Firstly, by establishing a connection between the DLCT and the additive autocorrelation, 
we characterize properties of DLCT by means of the Walsh transform and the differential distribution table, and present generic lower bounds on the differential-linear uniformity (DLU) of 
vectorial Boolean functions. Furthermore, we investigate the DLCT property of monomials, APN, plateaued and AB functions. 
Our study reveals that the DLCT of these special functions are closely related to other cryptographic criteria. 
Next, we prove  that the DLU of vectorial Boolean functions is invariant under
the extended-affine (EA) equivalence but not invariant under the Carlet-Charpin-Zinoviev (CCZ) equivalence, and that the DLCT spectrum is only invariant under affine equivalence. In addition, under affine equivalence, we exhaust the DLCT spectra and DLU of optimal S-boxes with $4$ bit by Magma. Finally, we investigate the DLCT spectra and DLU of some polynomials over $\gf_{2^n}$, including the inverse, Gold, Bracken-Leander power functions and all quadratic polynomials.


\end{abstract}	

\begin{IEEEkeywords}
	Vectorial Boolean Functions, Differential-Linear Connectivity Table, Additive Autocorrelation, Differential-Linear Uniformity 
\end{IEEEkeywords}

\section{Introduction}
{
  Let $p$ be a prime and $n, m$ two arbitrary positive integers. We denote by $\gf_{p^n}$ the finite field with $p^n$ elements and by $\gf_p^n$ the $n$-dimensional vector space over $\gf_p$. In this paper, we always identify the vector space $\gf_p^n$ with $\gf_{p^n}$. For any set $E$, we denote the nonzero elements of $E$ by $E^{*}$ and the cardinality of $E$ by $\#E$.  
  
  Vectorial Boolean functions, also  called  $(n,m)$-functions from $\gf_2^n$ to $\gf_{2}^m$, play a crucial role in block ciphers.  
Many attacks have been proposed against the diverse
block ciphers, and have led to criteria, such as low differential uniformity, high nonlinearity, high algebraic degree, etc, that the implemented cryptographic functions must satisfy.  
In Eurocrypt'18, Cid et al. \cite{BCT2018} introduced a new concept {on the cryptographic property of} S-boxes: the boomerang connectivity table (BCT) that
permits to simplify the complexity analysis on the dependency between the upper part and lower part of a block chipher in the boomerang attack.
The work of \cite{BCT2018} shortly attracted research attentions in the study of BCT property of cryptographic functions \cite{BC2018,LQSL2019,MTX2019,SQH2019}
and stimulated research progress in other cryptanalysis methods.
  Very recently, in Eurocrypt'19, Bar-On et al. \cite{DLCT2019} introduced  a new tool called the differential-linear connectivity table (DLCT) that
takes the dependency into account and to use it for making the classical differential-linear  attacks \cite{LH1994} more efficient.  
The authors also presented the relation between the DLCT and the differential distribution table (DDT) of S-boxes, indicating that each row of the DLCT  is equal (up to normalization) to the Fourier transform of the Boolean function represented by the corresponding row of the DDT. 

This paper aims to provide further theoretical characterization of the DLCT property, explicitly the DLCT spectrum and differential-linear uniformity (DLU), of vectorial Boolean functions. 
To this end, we firstly establish a connection between DLCT and a generalized concept of the \textit{additive autocorrelation}, which is extended from  Boolean functions to vectorial Boolean functions and is used as a main tool in this paper. 
Based on the study of the additive autocorrelation  of vectorial Boolean functions, we give some characterizations of the DLCT by means of the Walsh transform and the DDT, and generic lower bounds on the DLU of vectorial Boolean functions.  
Moreover, for certain functions like monomials, APN, plateaued and AB functions, we presented the relation of 
their DLCT  with other cryptographic criteria:
we show that the DLCT of monomials $x^d$ on $\gf_{2^n}$ with $\gcd\left(d,2^n-1\right)=1$ is identical to the additive autocorrelation  of $\tr_{2^n}\left(x^d\right)$, where  for any $\alpha\in\gf_{2^n}, $ $\tr_{2^n}(\alpha) = \alpha+\alpha^2+\alpha^{4}+\cdots+ \alpha^{2^{n-1}}$, and the DLU of $x^d$ becomes the absolute indicator of $\tr_{2^n}\left(x^d\right)$. The DLCT of APN and AB/plateaued functions are converted to the Walsh transform of two classes of balanced Boolean functions. 
 Next, we investigate the DLCT property of vectorial Boolean functions under affine, extended-affine (EA) and Carlet-Charpin-Zinoviev (CCZ) equivalence, and 
show that the DLCT spectrum is affine-invariant and the DLU is EA-invariant but not CCZ-invariant.
 Furthermore, based on the classification of optimal S-boxes with $4$ bit by Leander and Poschmann \cite{LP2007}, 
 we calculate their differential-linear uniformities and DLCT spectra (see Table \ref{4Autocorrelation}) by Magma, which indicates that the DLU of optimal $4$-bit S-boxes only takes value $4$ and $8$. 
 Finally, we investigate the DLCT spectra of some special polynomials over finite fields with characteristic two. It is shown in this part that the DLCT spectrum of the inverse function can be properly characterized by the Kloosterman sums, 
the DLCT spectrum of all quadratic polynomials can be completely determined and the DLCT of the Kasami APN monomials for some parameters appear to be optimal.

The rest of this paper is organized as follows. Section \ref{Preliminaries} recalls basic definitions, particularly the generalized tool of the additive autocorrelation and the  concept of DLCT, as well as some known results that we will use in our subsequent discussions. Section \ref{Characterizations} is devoted to the characterization of the DLCT:
we firstly characterize the DLCT by means of the Walsh transform and DDT and generic lower bounds on the DLU of any vectorial Boolean functions; and then study some properties about the DLCT of monomials, APN, plateaued and AB functions. 
Besides, we consider the property of invariant of the DLU, the DLCT spectrum under the affine, EA and CCZ equivalences. By Magma, we also present all possible values of the DLUs and DLCT spectra  of optimal $4$-bit S-boxes. 
In Section \ref{polynomials}, we compute  the DLCT spectra of some special polynomials. Finally, Section \ref{conclusion} draws a conclusion of our work in this paper. }
 
\section{Preliminaries}
\label{Preliminaries}

{In this section, we firstly recall some common concepts about (vectorial) Boolean functions and known results that are useful for our subsequent discussions. 
Since the vector spaces $\gf_2^n, \gf_2^m$ can be deemed as the finite fields  $\gf_{2^n}, \gf_{2^m}$ under certain bases, 
we will use the notation $\gf_2^n$ (resp. $\gf_2^m$) and $\gf_{2^n}$ (resp. $\gf_{2^m}$) interchangeably when there is no ambiguity.
We will also use the inner product $a\cdot b$, where $a,b\in\gf_2^n$, and $\tr_{2^n}(ab)$ in the context of vector spaces and finite fields  interchangeably.
}

\subsection{Walsh transform, Bent functions, AB functions and Plateaued functions}
  For any $n$-variable Boolean function $f$, its \textit{Walsh transform} from $\gf_{2}^n\to \mathbb{C}$ is defined as 
$$ W_f(\omega) = \sum_{x\in\gf_{2}^n}(-1)^{f(x)+\omega\cdot x}, $$
where $``\cdot"$ is an inner product on $\gf_2^n$. 
{ The Walsh transform of $f$ can be seen as the \textit{discrete Fourier transform} of the function $(-1)^{f(x)}$ and yields the well-known Parseval's relation \cite{Carlet2010}} :
$$\sum_{\omega\in\gf_{2}^n}W_f^2(\omega)=2^{2n}.$$
The \textit{nonlinearity} of $f$ is defined by
 $$\mathtt{NL}(f) = 2^{n-1} - \frac{1}{2} \max_{\omega\in\gf_{2}^n}|W_f(\omega)|, $$
 where $| r|$ denotes the absolute value of any integer $r$.  According to the Parseval's relation, it is easily seen that the nonlinearity of an $n$-variable Boolean function is upper bounded by $2^{n-1}-2^{n/2-1}$. 
 Boolean functions achieving the maximum nonlinearity are call \emph{bent} functions \cite{Carlet2010,Rothaus1976}, of which the Walsh transform takes only two values $ \pm 2^{n/2}$.

For an $(n,m)$-function $F$ from $\gf_2^n$ to $\gf_2^m$, its \textit{component Boolean function} for a nonzero $v\in \gf_2^m$ is given by 
$$ f_{v} (x) = v\cdot F(x).$$ For any $u\in\gf_{2}^n$ and $v\in\gf_{2}^m$, the Walsh transform of $F$ is defined by that of the component Boolean functions $f_v$, i.e.,
$$W_F(u,v) = \sum_{x\in\gf_{2}^n}(-1)^{u\cdot x + v\cdot F(x)}.$$ Moreover, the nonlinearity of $F$ is defined by the nonlinearities of the component Boolean functions, namely,
$$\mathtt{NL}(F) = \min_{v\in\gf_{2}^m\backslash\{0\}}\mathtt{NL}(v\cdot F).$$

An $(n,m)$-function $F$ is called \textit{vectorial bent}, or shortly \textit{bent} if 
all its component Boolean functions $f_v = v\cdot F(x)$ for each nonzero $v\in \gf_2^m$ are bent. It is well known that $F$ is bent only if $n$ is even and $m\le \frac{n}{2}$. 
Interested readers can refer to \cite{Sihem2016} for more results about bent functions.
For $(n,m)$-functions $F$ with $m\ge n-1$, the Sidelnikov-Chabaud-Vaudenay bound
$$\mathtt{NL}(F)\le 2^{n-1}-\frac{1}{2} \left( 3\cdot 2^n - 2(2^n-1)(2^{n-1}-1)/(2^m-1) -2  \right)^{1/2}$$
gives a better upper bound for nonlinearity than the universal bound \cite{CV1995}. This bound can be achieved 
by the \emph{almost bent} (AB) functions for odd $n$ and $n=m$, where the inequality becomes
$$\mathtt{NL}(F)\le 2^{n-1}-2^{\frac{n-1}{2}}.$$
 It is well known that a function $F$ from $\gf_{2}^n$ to itself is AB if and only if its Walsh transform takes only three values $0, \pm 2^{\frac{n+1}{2}}$ \cite{CV1995}.

A Boolean functions is called \emph{plateaued} if its Walsh transform takes at most three values: $0$ and $\pm \mu$ (where $\mu$ is some positive integer, called the \emph{amplitude} of the plateaued function). It is clear that bent functions are plateaued. Because of Parseval's relation, the amplitude $\mu$ of any plateaued function must be of the form $2^r$ where $r\ge n/2$. An $(n,m)$-function is called \emph{plateaued} if all its component functions $f_v=v\cdot F$, where $v\in\gf_{2}^m\backslash\{0\}$, are plateaued, with possibly different amplitudes. In particular, an $(n,m)$-function $F$ is called \emph{plateaued with single amplitude} if all its component functions are plateaued with the same amplitude. 
It is clear that AB functions are a subclass of plateaued functions with the single amplitude $2^{\frac{n+1}{2}}$. 

\subsection{Differential uniformity}
For an $(n,m)$-function $F$ and any $u\in\gf_2^n\backslash\{0\}$, the function 
$$D_uF(x) = F(x)+F(x+u)$$
is called the derivative of $F$ in direction $u$. The \textit{differential distribution table} (DDT) of $F$ is given by a $2^n\times 2^m$ table, in which the entry for the $(u,v)$ position is given by 
$$ \mathtt{DDT}_F(u,v) =\# \{ x\in\gf_2^n ~|~ D_uF(x) = v \}, $$
where  $u\in\gf_{2}^n$ and $v\in\gf_{2}^m$. The \textit{differential uniformity} of $F$ is defined as 
$$\delta_F=\max_{u\in\gf_{2}^n\backslash\{0\}, v\in\gf_{2}^m}\mathtt{DDT}_F(u,v).$$

Since $D_uF(x)=D_uF(x+u)$ for any $x, u$ in $\gf_2^n$, the entries of DDT are always even and the minimum of differential uniformity of $F$ is $2$. The functions with differential uniformity $2$ are called \emph{almost perfect nonlinear} (APN) functions.

{  
\subsection{The additive autocorrelation of (vectorial) Boolean functions} \label{Subsec-AAC}

In this subsection, we first recall the additive autocorrelation of cryptographic Boolean functions introduced in \cite{ZZ1995}  and extend the definition to vectorial Boolean functions.

\begin{Def} \cite{ZZ1995} Given a Boolean function $f$ on $\gf_{2^n}$. For each $u\in\gf_{2^n}$, the \emph{additive autocorrelation} of the function $f$ at $u$ is defined as 
	$$\Delta_f(u) = \sum_{x\in\gf_{2^n}}(-1)^{f(x)+f(x+u)}.$$ 
	Furthermore, the \emph{absolute indicator} of $f$ is defined as $\Delta_f = \max_{u\in \gf_{2^n}^{*}}|\Delta_f(u)|$ and the  \emph{sum-of-squares} indicator of $f$ is defined by $\nu_f=\sum_{u\in\gf_{2}^n}\Delta_f^2(u)$.
\end{Def}

Similar to the Walsh transform, we define the above criteria of a vectorial Boolean function $F$ in terms of its component Boolean functions and 
use the same names.
\begin{Def}
	\label{AC_Def}
	Let $F$ be an $(n,m)$-function. For any  $u\in\gf_{2}^n$ and $v\in\gf_{2}^m$,  the \emph{additive autocorrelation} of $F$ at $(u,v)$ is defined as
	$$\Delta_F(u,v) = \sum_{x\in\gf_{2}^n}(-1)^{v\cdot (F(x) + F(x+u))},$$
	and the \emph{autocorrelation spectrum} of $F$ is given by
	$$ \Lambda_F := \Big\{ \Delta_F(u,v): u\in\gf_{2}^n\backslash \{ 0 \}, v\in\gf_{2}^m \backslash \{ 0 \}   \Big\}. $$ 
	Moreover, the \emph{absolute indicator} of $F$ is defined as $$\Delta_F =\max_{u\in\gf_{2}^n\backslash\{0\}, v\in\gf_{2}^m\backslash\{0\}}|\Delta_F(u,v)|, $$ 
	and the \emph{sum-of-squares} indicator of $F$ is defined as 
	$$\nu_F=\sum_{u\in\gf_{2}^n,v\in\gf_2^m}\Delta_F^2(u,v),$$
	
\end{Def}
\begin{Rem}
	\label{remark_Delta}
	\emph{From the above definition, it is clear that when $u=0$ or $v=0$, $\Delta_F(u,v)=2^n$. Moreover, for any  $u\in\gf_{2}^n$ and $v\in\gf_{2}^m$, we have
		\begin{equation*}
		\Delta_F(u,v) =  \sum_{\omega\in\gf_{2}^m}(-1)^{\omega\cdot v} \mathtt{DDT}_F(u,\omega). 
		\end{equation*}}
\end{Rem}
}

\subsection{Kloosterman sums}

The Kloosterman sums  have been widely studied for a long time for their own sake as interesting mathematical objects and have recently become the focus of much research. 
Below we recall the classical binary Kloosterman sums and some results that are used in this paper.
 
The Kloosterman sum over $\gf_{2^n}$ is defined as 
$$ K(a) = \sum_{x\in\gf_{2^n}^{*}}(-1)^{\tr_{2^n}\left(\frac{1}{x}+ax\right)}.  $$

\begin{Lemma}
	\label{K_range}
	\cite{LW1990}
	Let $n\ge 3$, for any integer $s\equiv-1\pmod 4$ in the range 
	$$ \left[ -2^{ \frac{n}{2}  +1}, 2^{ \frac{n}{2}  +1}    \right], $$
	there is an element $a\in\gf_{2^n}$ such that $K(a) = s$. 
\end{Lemma}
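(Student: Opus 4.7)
The plan is to follow the elliptic-curve approach of Lachaud and Wolfmann. The central observation is that for each nonzero $a \in \gf_{2^n}$, the Kloosterman sum $K(a)$ equals, up to sign and an additive constant, the trace of Frobenius of an elliptic curve in a one-parameter family over $\gf_{2^n}$. Concretely, I would work with the family
$$ E_a : y^2 + xy = x^3 + a, \qquad a \in \gf_{2^n}^*, $$
split the affine count of $E_a(\gf_{2^n})$ according to $x$, and use the standard formula $\#\{z\in\gf_{2^n} : z^2+z=c\} = 1 + (-1)^{\tr_{2^n}(c)}$ for $\tr_{2^n}(c)=0$. After a Frobenius-invariant reparametrization (absorbing factors of the form $a\mapsto a^{2^i}$ that leave the trace sum unchanged), this yields a relation of the form $\#E_a(\gf_{2^n}) = 2^n + K(a)$, so that the trace of Frobenius is $t_a = 1 - K(a)$.

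Granted this correspondence, the range $K(a) \in [-2^{n/2+1}, 2^{n/2+1}]$ is immediate from the Hasse-Weil bound $|t_a| \le 2 \cdot 2^{n/2}$. The congruence $K(a) \equiv -1 \pmod 4$ would follow by inspecting the $2$- and $4$-torsion of $E_a(\gf_{2^n})$: the family is chosen so that $E_a$ always has a rational point of order $2$ and a controlled $2$-Sylow structure, which fixes $\#E_a(\gf_{2^n}) \pmod 4$ and hence $K(a) \pmod 4$. Alternatively, one can argue directly by rewriting $K(a) + 1 = \sum_{x \in \gf_{2^n}} (-1)^{\tr_{2^n}(x^{-1} + ax)}$ (with the convention $0^{-1}=0$) and analyzing the parity of the zero set of $\tr_{2^n}(x^{-1}+ax)$ via the involution $x \mapsto 1/(ax)$, which pairs up solutions in groups of four.

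The principal step is surjectivity: every $s \equiv -1 \pmod 4$ in the stated range must be realized as $K(a)$ for some $a \in \gf_{2^n}$. For this I would invoke Deuring's theorem, refined by Waterhouse, which tells us that every odd integer $t$ with $|t| \le 2 \cdot 2^{n/2}$ occurs as the trace of Frobenius of some elliptic curve over $\gf_{2^n}$. Matching $t = 1 - s$ then exhibits an abstract curve $E/\gf_{2^n}$ with the desired trace, and the question reduces to whether $E$ can be placed in the specific Weierstrass shape $E_a$ by a $\gf_{2^n}$-isomorphism.

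The main obstacle is precisely this last coverage step. Deuring--Waterhouse only guarantees the existence of \emph{some} elliptic curve with the prescribed trace, whereas we need a representative within our one-parameter family $\{E_a\}_{a\in \gf_{2^n}^*}$. Verifying that the map $a \mapsto j(E_a)$, together with the appropriate quadratic twists inside the family, sweeps out every required $\gf_{2^n}$-isomorphism class (so that both signs of the trace, and hence both signs of $K(a)$, are realized) requires a careful analysis of Weierstrass normal forms and twists in characteristic $2$. This coverage argument is the technical heart of Lachaud and Wolfmann's proof, and is where a self-contained proposal would need to invest the most work; the condition $n \ge 3$ in the statement reflects the range in which this twist-counting argument becomes unconditional.
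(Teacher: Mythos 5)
The paper does not prove this lemma at all --- it is quoted with a citation to \cite{LW1990} --- so the only meaningful benchmark is the Lachaud--Wolfmann proof itself, and your outline is a faithful sketch of exactly that argument: identify $K(a)$ with the point count of the ordinary curve $E_a: y^2+xy=x^3+a$, read off the range from Hasse--Weil, the congruence from the $2$-power torsion, and surjectivity from Deuring--Waterhouse plus a coverage argument for the one-parameter family. One bookkeeping slip: this paper defines $K(a)$ as a sum over $\gf_{2^n}^{*}$, under which the correct relation is $\#E_a(\gf_{2^n})=2^n+1+K(a)$ and $t_a=-K(a)$; your formulas $\#E_a=2^n+K(a)$ and $t_a=1-K(a)$ correspond to the full-sum convention, for which the congruence would read $K(a)\equiv 0\pmod 4$ rather than $-1$. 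Harmless, but worth fixing since the lemma as stated uses the punctured sum.

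The genuine gap is the one you flag yourself: Deuring--Waterhouse only produces \emph{some} curve with trace $t=-s$, and the lemma needs that curve to land in the family $\{E_a\}$; as written the proposal is a roadmap, not a proof, because this is precisely where the content lies. It closes more cleanly than you suggest, though. Every ordinary curve over $\gf_{2^n}$ has $j\neq 0$ and a model $y^2+xy=x^3+a_2x^2+a_6$ with $j=1/a_6$; the substitution $y\mapsto y+sx$ shifts $a_2$ by $s^2+s$, so $a_2$ can be killed if and only if $\tr_{2^n}(a_2)=0$, and the two twists per $j$-invariant are separated by $\tr_{2^n}(a_2)\in\{0,1\}$, equivalently by whether $\#E\equiv 0$ or $2\pmod 4$ (the point $(0,\sqrt{a_6})$ is halvable precisely when $\tr_{2^n}(a_2)=0$). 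Now take $s\equiv-1\pmod 4$ in the Weil interval: Waterhouse gives an ordinary curve $E$ with trace $-s$, its order $2^n+1+s$ is divisible by $4$ (here $n\ge 2$ suffices), hence $E$ is the trivial twist and is isomorphic to some $E_a$, giving $K(a)=s$. So the ``twist-counting'' you anticipate as the technical heart reduces to the single equivalence ``order divisible by $4$ $\Leftrightarrow$ $\tr_{2^n}(a_2)=0$''; supplying that equivalence (or citing it) is what your proposal still owes.
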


\begin{Lemma}
	\label{KK}
	\cite{CHZ2007,HZ1999,L2008}
	Let $n\ge 3$. For any $v\in\gf_{2^n}$, $K(v) \equiv -1 \pmod 8$ if $\tr_{2^n}(v) = 0$ and $K(v) \equiv 3 \pmod 8 $ if $\tr_{2^n}(v) = 1$.
\end{Lemma}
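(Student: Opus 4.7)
The plan is to realize $-K(v)$ as the trace of Frobenius on a suitable elliptic curve over $\gf_{2^n}$, and to read off $K(v)\pmod 8$ from the $2$-adic structure of the curve's group of $\gf_{2^n}$-rational points. First, set
$$A_v = \#\{x \in \gf_{2^n}^* : \tr_{2^n}(1/x + vx) = 0\},$$
so that $K(v) = 2A_v - (2^n-1)$. Since $n\ge 3$ gives $2^n \equiv 0 \pmod 8$, I have $K(v) \equiv 2A_v + 1 \pmod 8$, and the target congruence is equivalent to $A_v \equiv 3 \pmod 4$ when $\tr_{2^n}(v)=0$ and $A_v \equiv 1 \pmod 4$ otherwise. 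The case $v=0$ is immediate since $K(0)=-1$.

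Next, I would transform the Artin--Schreier equation defining $A_v$ into an elliptic curve. For each $x\ne 0$, the equation $y^2+y = 1/x + vx$ has exactly $2$ solutions $y \in \gf_{2^n}$ iff $\tr_{2^n}(1/x + vx) = 0$, and none otherwise. Multiplying by $x^2$ and substituting $(X,Y) = (x, xy)$ yields the smooth elliptic curve
$$E_v: \; Y^2 + XY = vX^3 + X$$
over $\gf_{2^n}$ (a quick partial-derivative check verifies smoothness for $v\ne 0$). Counting the added affine point $(0,0)$ and the point at infinity gives $\#E_v(\gf_{2^n}) = 2A_v + 2 = 2(A_v+1)$. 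The claim becomes $\#E_v(\gf_{2^n}) \equiv 0 \pmod 8$ iff $\tr_{2^n}(v)=0$, and $\#E_v(\gf_{2^n}) \equiv 4 \pmod 8$ otherwise.

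The last step is to analyze the $2$-part of $E_v(\gf_{2^n})$, which is necessarily cyclic because in the decomposition $E_v(\gf_{2^n}) \cong \Z/m_1 \oplus \Z/m_2$ the divisor $m_1$ divides $2^n-1$ and is therefore odd. The point $(0,0)$ is visibly $2$-torsion (the tangent there is vertical). A direct tangent-line computation shows that $P := (v^{-1/2}, 0)$ also lies on $E_v$ and satisfies $2P = (0,0)$, so $P$ has order $4$ and $4\mid \#E_v(\gf_{2^n})$ unconditionally. Whether there is a $\gf_{2^n}$-rational point $Q$ of order $8$ with $2Q = P$ is then controlled by the doubling identity $X_{2Q} = (\lambda^2+\lambda)/v$, where $\lambda$ is the tangent slope at $Q$: imposing $X_{2Q} = v^{-1/2}$ reduces to the Artin--Schreier equation
$$\lambda^2 + \lambda = v^{1/2},$$
solvable in $\gf_{2^n}$ iff $\tr_{2^n}(v^{1/2})=0$. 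Since $\tr_{2^n}(v^{1/2})^2 = \tr_{2^n}(v)$ by $\gf_2$-linearity of the trace, this is equivalent to $\tr_{2^n}(v)=0$, which closes the argument.

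The main obstacle is the doubling-formula computation on the non-standard Weierstrass model $Y^2+XY = vX^3 + X$: I need to carry out the cubic-root-sum analysis carefully and confirm that solvability of $\lambda^2+\lambda = v^{1/2}$ really does correspond to the existence of an order-$8$ $\gf_{2^n}$-rational point (rather than only a lift in an extension). A more arithmetic alternative, followed in the original proofs \cite{CHZ2007,HZ1999,L2008}, instead extracts $K(v)\pmod 8$ from the weight distribution modulo $8$ of the dual Melas code, in which $\tr_{2^n}(v)$ surfaces through the cubic coefficient of the code's locator polynomial.
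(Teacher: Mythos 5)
Your reduction to the elliptic curve $E_v\colon Y^2+XY=vX^3+X$ is sound: the count $\#E_v(\gf_{2^n})=2A_v+2=2^n+1+K(v)$ is correct, the $2$-Sylow subgroup is cyclic because $m_1\mid 2^n-1$, the point $(0,0)$ is the unique nontrivial $2$-torsion point, and $(v^{-1/2},0)$ really does double to $(0,0)$ (its tangent is $Y=0$, which meets the curve again only at the origin), so $4\mid\#E_v(\gf_{2^n})$ unconditionally and the lemma reduces to deciding whether a rational point of order $8$ exists. The doubling identity $X_{2Q}=(\lambda^2+\lambda)/v$ also checks out. But as written the argument proves only half of the lemma. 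Solvability of $\lambda^2+\lambda=v^{1/2}$ is a \emph{necessary} condition on the tangent slope of any rational $Q$ with $2Q=\pm(v^{-1/2},0)$; its failure (i.e.\ $\tr_{2^n}(v)=1$) therefore rules out an order-$8$ point and yields $K(v)\equiv 3\pmod 8$. In the case $\tr_{2^n}(v)=0$ you still owe the \emph{existence} of such a rational $Q$: a rational $\lambda$ does not by itself produce a rational point on the curve with that tangent slope, and you flag exactly this as unresolved. The gap is real but closable by a short computation: writing $Q=(x_1,y_1)$ with $\lambda x_1=y_1+vx_1^2+1$ and substituting into the curve equation collapses everything to $v^2x_1^4+v^{1/2}x_1^2+1=0$, which under $x_1^2=v^{-3/2}u$ becomes $u^2+u=v$, again solvable precisely when $\tr_{2^n}(v)=0$; then $x_1=(v^{-3/2}u)^{1/2}$ and $y_1$ are rational and $2Q$ has $X$-coordinate $v^{-1/2}$, hence order $4$, so $Q$ has order $8$. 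You must include this (or an equivalent halving argument) for the proof to be complete.

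For context: the paper gives no proof of this lemma, which is quoted from \cite{CHZ2007,HZ1999,L2008}, so there is no internal argument to compare against. Your elliptic-curve route is essentially the method of the cited reference \cite{L2008}, whereas \cite{HZ1999} and \cite{CHZ2007} reach the same congruences through the weight distribution of Melas codes and direct character-sum manipulations.
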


\subsection{Differential-Linear Connectivity Table}

Very recently, Bar-On et al. in \cite{DLCT2019} presented the concept of the differential-linear connectivity table (DLCT) of $(n,m)$-functions $F$. 
Due to the isomorphism between vector spaces and finite fields,
the definition of DLCT will be converted to that of polynomials over finite fields.

\begin{Def}
	\cite{DLCT2019}
	Let $F$ be an $(n,m)$-function. The DLCT of $F$ is an $2^n\times 2^m$ table whose rows correspond to input differences to $F$ and whose columns correspond to bit masks of outputs of $F$. Formally, for $u\in\gf_2^n$ and $v\in\gf_2^m$, the DLCT entry $(u, v)$ is 
	$$\mathtt{DLCT}_F(u, v)  = \#  \{ x \in\gf_2^n | v\cdot F(x) = v \cdot F(x+u)  \}   -2^{n-1}.  $$
\end{Def}

Since for any $u\in\gf_{2}^n\backslash\{0\}$, $D_uF(x) = D_uF(x+u)$, $\mathtt{DLCT}_F(u,v)$ must be even. Furthermore, for a given $u\in\gf_{2}^n\backslash\{0\}$, if  $D_uF(x)$ is a $2\ell$ to $1$ mapping, where $\ell$ is a positive integer, then $\mathtt{DLCT}_F(u,v)$ is a multiple of $2\ell$. Moreover, it is trivial that for any $(u, v)\in\gf_2^n\times \gf_{2}^m$, $\left|  \mathtt{DLCT}_F(u, v)\right| \le 2^{n-1} $, and $\mathtt{DLCT}_F(u,v) = 2^{n-1}$ when either $u=0$ or $v=0$. Therefore, 
we only need to focus on the cases for $u\in\gf_{2}^n\backslash\{0\}$ and $v\in\gf_{2}^m\backslash\{0\}$.
In addition, we introduce some relevant definitions of DLCT. 

\begin{Def}
Let $F$ be an $(n,m)$-function. The DLCT spectrum of $F$ is the super set of 
$\mathtt{DLCT}_F(u,v)$ for all nonzero $u$ and $v$, namely,
	$$ \Gamma_F = \Big\{ \mathtt{DLCT}_F(u,v): u\in\gf_{2}^n\backslash \{ 0 \}, v\in\gf_{2}^m \backslash \{ 0 \}   \Big\}, $$ 	
	and the differential-linear uniformity (DLU) of $F$ is defined as 
	$$ \mathtt{DLU}_F := \max\limits_{u\in\gf_{2}^n\backslash\{0\}, v\in\gf_{2}^m\backslash\{0\}} | \mathtt{DLCT}_F(u,v) |. $$
\end{Def}

In \cite{DLCT2019} the authors  gave the follwing relation between the DLCT and the Fourier-Walsh transform of the DDT: $$\mathtt{DLCT}_F(u,v) = \frac{1}{2} \sum_{\omega\in\gf_{2}^m}(-1)^{\omega\cdot v} \mathtt{DDT}_F(u,\omega). $$ 
{This immediately gives  the following connection between the DLCT and the additive autocorrelation of vectorial Boolean functions:
	
\begin{Prop}
		\label{DLCT_Delta}
		Let $F$ be an $(n,m)$-function. Then for any $u\in\gf_{2}^n$ and $v\in\gf_{2}^m$, the additive autocorrelation of $F$ at $(u,v)$ is twice the value of the DLCT of $F$ at the same position $(u,v)$, i.e., 
		\begin{equation}
		\label{DLCT_W}
		\mathtt{DLCT}_F(u,v) = \frac{1}{2}\Delta_F(u,v). 
		\end{equation} 
		Moreover, $\mathtt{DLU}_F = \frac{1}{2}\Delta_F$.
\end{Prop}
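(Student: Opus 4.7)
The plan is to prove the identity $\mathtt{DLCT}_F(u,v) = \tfrac{1}{2}\Delta_F(u,v)$ directly from the definitions by rewriting the counting condition in the DLCT as an exponential sum. The key observation is that for a Boolean value $b = v\cdot(F(x)+F(x+u)) \in \gf_2$, one has $b=0$ if and only if $v\cdot F(x) = v\cdot F(x+u)$, and the indicator of this event can be written as $\tfrac{1}{2}\bigl(1+(-1)^{b}\bigr)$. This lets me convert a combinatorial count into the sum defining $\Delta_F(u,v)$.

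First I would write
\begin{equation*}
\#\{x\in\gf_2^n \mid v\cdot F(x) = v\cdot F(x+u)\} \;=\; \sum_{x\in\gf_2^n}\frac{1+(-1)^{v\cdot(F(x)+F(x+u))}}{2}.
\end{equation*}
Splitting the sum and pulling out the constant term gives $2^{n-1} + \tfrac{1}{2}\Delta_F(u,v)$ by the definition of the additive autocorrelation in Definition \ref{AC_Def}. Subtracting $2^{n-1}$, as prescribed by the DLCT definition, yields $\mathtt{DLCT}_F(u,v) = \tfrac{1}{2}\Delta_F(u,v)$, which is exactly \eqref{DLCT_W}. Note that this also matches the known relation $\mathtt{DLCT}_F(u,v) = \tfrac{1}{2}\sum_{\omega}(-1)^{\omega\cdot v}\mathtt{DDT}_F(u,\omega)$ of \cite{DLCT2019} together with Remark \ref{remark_Delta}, so one could equivalently cite these two identities and chain them; I prefer the direct route because it is self-contained.

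For the second statement $\mathtt{DLU}_F = \tfrac{1}{2}\Delta_F$, I would simply take absolute values on both sides of \eqref{DLCT_W} and then maximize over $u\in\gf_2^n\setminus\{0\}$ and $v\in\gf_2^m\setminus\{0\}$, invoking the definitions of $\mathtt{DLU}_F$ and $\Delta_F$ given in the preceding subsections. There is no real obstacle here: the argument is a one-line manipulation of indicator functions, and the only thing to be careful about is that the convention $\Delta_F(u,v)=2^n$ when $u=0$ or $v=0$ (Remark \ref{remark_Delta}) is consistent with $\mathtt{DLCT}_F(u,v)=2^{n-1}$ in those degenerate cases, so the identity \eqref{DLCT_W} in fact holds for all $(u,v)\in\gf_2^n\times\gf_2^m$, not just the nonzero pairs used for the DLU/absolute indicator.
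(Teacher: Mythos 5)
Your proof is correct. The paper itself does not give a standalone computation: it obtains \eqref{DLCT_W} by chaining the relation $\mathtt{DLCT}_F(u,v)=\frac{1}{2}\sum_{\omega\in\gf_2^m}(-1)^{\omega\cdot v}\mathtt{DDT}_F(u,\omega)$ quoted from \cite{DLCT2019} with the Fourier expression for $\Delta_F(u,v)$ in terms of the DDT recorded in Remark \ref{remark_Delta}, whereas you derive the identity directly from the definitions via the indicator identity $\mathbf{1}[b=0]=\frac{1}{2}\bigl(1+(-1)^{b}\bigr)$ applied to $b=v\cdot(F(x)+F(x+u))$. The two routes are trivially equivalent, but yours is self-contained (it does not rely on the cited DDT--Fourier relation), and you correctly note --- as you acknowledge yourself --- that the DDT route is available as an alternative. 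Your handling of the second claim (taking absolute values and maximizing over nonzero $u,v$) and your remark that the identity also holds in the degenerate cases $u=0$ or $v=0$ are both accurate and consistent with the paper's conventions.
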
  }

\section{Some characterizations and properties of DLCT}
\label{Characterizations}

{In this section, we give some characterizations and properties of DLCT of vectorial Boolean functions from the viewpoint of the additive autocorrelation introduced in Subsection \ref{Subsec-AAC}. }

\subsection{Characterizations of DLCT by means of the Walsh transform}
In this subsection, we give some characterizations of DLCT by the Walsh transform. 
\begin{Prop}
	\label{DLCT_WT}
	 Let $F$ be an $(n,m)$-function.	Then for any $u\in\gf_{2}^n$ and $v\in\gf_{2}^m$, 
	 \begin{enumerate}[(1)]
	 	\item $$\mathtt{DLCT}_F(u,v) = \frac{1}{2} W_{D_uF}(0, v);$$
  	\item  	\begin{equation}
\label{DW}
\mathtt{DLCT}_F(u,v) = \frac{1}{2^{n+1}}\sum_{\omega\in\gf_2^n}(-1)^{u\cdot \omega} W_F(\omega,v)^2;
\end{equation}
Moreover, we have 
$$\sum_{u\in\gf_{2}^n} 	\mathtt{DLCT}_F(u,v) = \frac{1}{2}W_F(0,v)^2; $$
\item 	\begin{equation}
\label{D2W4}
\sum_{u\in\gf_{2}^n}\mathtt{DLCT}_F(u,v)^2 = \frac{1}{2^{n+2}}\sum_{\omega\in\gf_{2}^n}W_F(\omega,v)^4. 
\end{equation} 
	 \end{enumerate}
\end{Prop}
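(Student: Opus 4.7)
The three identities are all Fourier-analytic consequences of the connection $\mathtt{DLCT}_F(u,v) = \tfrac{1}{2}\Delta_F(u,v)$ from Proposition \ref{DLCT_Delta}, so my plan is to reduce everything to computations with the additive autocorrelation and then apply orthogonality of the characters $(-1)^{u\cdot\omega}$.

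For part (1), I would simply unwind definitions. By the definition of the Walsh transform applied to the single component $v\cdot D_uF$,
\[
W_{D_uF}(0,v) = \sum_{x\in\gf_2^n}(-1)^{v\cdot D_uF(x)} = \sum_{x\in\gf_2^n}(-1)^{v\cdot(F(x)+F(x+u))} = \Delta_F(u,v),
\]
and Proposition \ref{DLCT_Delta} then yields $\mathtt{DLCT}_F(u,v) = \tfrac12 W_{D_uF}(0,v)$.

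For part (2), the plan is to expand the square of the Walsh transform and swap sums. Writing $W_F(\omega,v)^2$ as a double sum over $x,y$, the identity
\[
\sum_{\omega\in\gf_2^n}(-1)^{u\cdot\omega}W_F(\omega,v)^2
= \sum_{x,y\in\gf_2^n}(-1)^{v\cdot(F(x)+F(y))}\sum_{\omega\in\gf_2^n}(-1)^{\omega\cdot(u+x+y)}
\]
follows, and the inner character sum collapses (via orthogonality) to $2^n$ times the indicator of $y=x+u$. This leaves $2^n\Delta_F(u,v)=2^{n+1}\mathtt{DLCT}_F(u,v)$, which is the claimed formula \eqref{DW}. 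To get $\sum_u \mathtt{DLCT}_F(u,v)=\tfrac{1}{2}W_F(0,v)^2$, I would sum \eqref{DW} over $u$ and use orthogonality a second time: $\sum_u(-1)^{u\cdot\omega}=2^n\cdot[\omega=0]$ selects the single term $\omega=0$.

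For part (3), I would apply Parseval's relation to \eqref{DW}, viewing $\mathtt{DLCT}_F(\cdot,v)$ as (a scalar multiple of) the inverse Walsh transform of $\omega\mapsto W_F(\omega,v)^2$. Explicitly, squaring \eqref{DW} and summing over $u$ gives
\[
\sum_{u\in\gf_2^n}\mathtt{DLCT}_F(u,v)^2
= \frac{1}{2^{2n+2}}\sum_{\omega_1,\omega_2\in\gf_2^n}W_F(\omega_1,v)^2 W_F(\omega_2,v)^2\sum_{u\in\gf_2^n}(-1)^{u\cdot(\omega_1+\omega_2)},
\]
and the inner sum forces $\omega_1=\omega_2$, contributing a factor $2^n$; the remaining diagonal sum is $\sum_{\omega}W_F(\omega,v)^4$, giving \eqref{D2W4}. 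The whole argument is routine; the only potential pitfall is bookkeeping the powers of $2$ consistently across the three statements, so I would double-check each normalization factor against the $u=0$ case where $\mathtt{DLCT}_F(0,v)=2^{n-1}$.
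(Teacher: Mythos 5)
Your proposal is correct and follows essentially the same route as the paper: part (1) by unwinding definitions, part (2) by expanding $W_F(\omega,v)^2$ as a double sum and applying character orthogonality (the paper phrases this as deriving $W_F(\omega,v)^2=\sum_u(-1)^{\omega\cdot u}\Delta_F(u,v)$ and then invoking the inverse discrete Fourier transform, which is the same computation), and part (3) by squaring \eqref{DW}, summing over $u$, and collapsing to the diagonal $\omega_1=\omega_2$. All normalization constants in your argument match the paper's.
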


\begin{proof}
\begin{enumerate}[(1)]
	\item This item can be easily verified according to the definition. 
		\item 	According to the definition, for any $\omega\in\gf_{2}^n$,
	\begin{eqnarray*}
		W_F(\omega,v)^2 &=& \sum_{x\in\gf_{2}^n}(-1)^{\omega\cdot x + v\cdot F(x)}\sum_{y\in\gf_{2}^n}(-1)^{\omega\cdot y + v\cdot F(y)} \\
		&=&\sum_{x,y\in\gf_2^n}(-1)^{\omega\cdot(x+y)+ v\cdot (F(x) + F(y))} \\
		&=&\sum_{x,u\in\gf_2^n}(-1)^{\omega\cdot u + v\cdot (F(x) + F(x+u))} \\
		&=&\sum_{u\in\gf_2^n}(-1)^{\omega\cdot u}\sum_{x\in\gf_2^n}(-1)^{v\cdot (F(x) + F(x+u))}\\
		&=&\sum_{u\in\gf_2^n}(-1)^{\omega\cdot u}\Delta_F(u,v).
	\end{eqnarray*}
	
	From the Inverse Discrete Fourier Transform, we have 
	$$\Delta_F(u,v) = \frac{1}{2^n}\sum_{\omega\in\gf_2^n}(-1)^{\omega\cdot u} W_F(\omega,v)^2. $$
	Therefore, Eq. (\ref{DW})  holds with Eq. (\ref{DLCT_W}). Moreover, we have 
	\begin{eqnarray*}
		&&	\sum_{u\in\gf_2^n}\mathtt{DLCT}_F(u,v) \\
		&=& \frac{1}{2^{n+1}} \sum_{\omega\in\gf_{2}^n} W_F(\omega,v)^2\sum_{u\in\gf_2^n}(-1)^{\omega\cdot u}\\
		&=& \frac{1}{2} W_F(0,v)^2.
	\end{eqnarray*}
	\item For any $u\in\gf_{2}^n$ and $v\in\gf_{2}^m$, 
	\begin{eqnarray*}
		\mathtt{DLCT}_F(u,v)^2 & = & \frac{1}{2^{2n+2}} \sum_{\omega_1\in\gf_{2}^n} (-1)^{u\cdot \omega_1}W_F(\omega_1,v)^2\sum_{\omega_2\in\gf_{2}^n} (-1)^{u\cdot \omega_2}W_F(\omega_2,v)^2\\
		&=& \frac{1}{2^{2n+2}}\sum_{\omega_1,\omega_2\in\gf_{2}^n} (-1)^{u\cdot\left( \omega_1+\omega_2\right)}W_F(\omega_1,v)^2W_F(\omega_2,v)^2.
	\end{eqnarray*}
	Furthermore, 
	\begin{eqnarray*}
		\sum_{u\in\gf_{2}^n}\mathtt{DLCT}_F(u,v)^2  & = & \frac{1}{2^{2n+2}} \sum_{u,\omega_1,\omega_2\in\gf_{2}^n} (-1)^{u\cdot\left( \omega_1+\omega_2\right)}W_F(\omega_1,v)^2W_F(\omega_2,v)^2 \\
		&=& \frac{1}{2^{2n+2}} \sum_{\omega_1,\omega_2\in\gf_{2}^n} W_F(\omega_1,v)^2W_F(\omega_2,v)^2 \left(\sum_{u\in\gf_{2}^n}   (-1)^{u\cdot\left( \omega_1+\omega_2\right)} \right) \\
		&=& \frac{1}{2^{n+2}}\sum_{\omega\in\gf_{2}^n}W_F(\omega,v)^4.
	\end{eqnarray*}
\end{enumerate}
	The proof is completed.
\end{proof}

{
\begin{Rem}
		\emph{ It should be noted that the authors of \cite{GK2004} and \cite{ZZ1995} obtained the relations in Eq. (\ref{DW}) and Eq. (\ref{D2W4}) for Boolean functions.
		Here we generalize those results to vectorial Boolean functions. }
\end{Rem} 
}

\subsection{Characterizations of DLCT by means of the DDT}
We in this subsection consider the characerizations of DLCT by means of the DDT. 
\begin{Prop}
	\label{DLCT_DDT}
	Let $F$ be an $(n,m)$-function. Then	for any $u\in\gf_{2}^n$ and $v\in\gf_{2}^m$,
	\begin{enumerate}[(1)]
		\item  $$\sum_{v\in\gf_{2}^m} \mathtt{DLCT}_F(u,v) = 2^{m-1} \mathtt{DDT}_F(u,0);  $$ 
		moreover, $$\sum_{u\in\gf_2^n,v\in\gf_2^m}\mathtt{DLCT}_F(u,v)= 2^{m+n-1}.$$ 
		In particular, when $m=n$ and $F$ permutes $\gf_{2}^n$,  $\sum_{v\in\gf_{2}^n} \mathtt{DLCT}_F(u,v) = 0;$
		\item 	
		\begin{equation}
		\label{D2DDT2}
		\sum_{v\in\gf_2^m} \mathtt{DLCT}_F(u,v)^2 = 2^{m-2} \sum_{\omega\in\gf_2^m} \mathtt{DDT}_F(u,\omega)^2.
		\end{equation}
	\end{enumerate} 
\end{Prop}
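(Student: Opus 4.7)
\smallskip

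The plan is to reduce both identities to straightforward applications of character orthogonality on $\gf_2^m$, using the explicit DDT expression for the DLCT that follows from Proposition \ref{DLCT_Delta} together with Remark \ref{remark_Delta}. Namely, combining those two facts gives
\[
\mathtt{DLCT}_F(u,v) \;=\; \frac{1}{2}\sum_{\omega\in\gf_2^m}(-1)^{\omega\cdot v}\,\mathtt{DDT}_F(u,\omega),
\]
which will serve as the single starting point for both parts.

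For item (1), I would substitute the above expression into $\sum_{v\in\gf_2^m}\mathtt{DLCT}_F(u,v)$, swap the order of summation, and collapse the inner character sum $\sum_{v}(-1)^{\omega\cdot v}$ using orthogonality: it equals $2^m$ when $\omega=0$ and vanishes otherwise. Only the $\omega=0$ term survives, yielding $2^{m-1}\mathtt{DDT}_F(u,0)$. The double-sum identity then follows by summing this over $u\in\gf_2^n$; for a permutation $F$ one has $\mathtt{DDT}_F(u,0)=0$ for all $u\ne 0$ and $\mathtt{DDT}_F(0,0)=2^n$, so the total reduces to $2^{m+n-1}$. The ``in particular'' statement is the same observation restricted to a fixed nonzero $u$: if $F$ is a permutation, then $F(x)=F(x+u)$ is impossible for $u\ne 0$, forcing $\mathtt{DDT}_F(u,0)=0$ and hence $\sum_v\mathtt{DLCT}_F(u,v)=0$.

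For item (2), I would square the DDT expression for the DLCT to obtain
\[
\mathtt{DLCT}_F(u,v)^2 \;=\; \frac{1}{4}\sum_{\omega_1,\omega_2\in\gf_2^m}(-1)^{(\omega_1+\omega_2)\cdot v}\,\mathtt{DDT}_F(u,\omega_1)\mathtt{DDT}_F(u,\omega_2),
\]
then sum over $v\in\gf_2^m$ and interchange summations. The inner character sum $\sum_{v}(-1)^{(\omega_1+\omega_2)\cdot v}$ equals $2^m$ when $\omega_1=\omega_2$ and $0$ otherwise, which forces the diagonal and immediately collapses the double sum over $(\omega_1,\omega_2)$ to a single sum over $\omega\in\gf_2^m$, producing the factor $\tfrac{1}{4}\cdot 2^m = 2^{m-2}$ in front of $\sum_{\omega}\mathtt{DDT}_F(u,\omega)^2$.

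There is no real obstacle here; the only thing to be careful about is the bookkeeping of the normalizing factor $\tfrac12$ coming from Proposition \ref{DLCT_Delta} (which squares to $\tfrac14$ in part (2)) and the clean use of orthogonality of the additive characters of $\gf_2^m$. Both identities are in fact the DDT-side counterparts of the Walsh-side identities already established in Proposition \ref{DLCT_WT}, and the proofs mirror each other step by step, with the character sum over $v\in\gf_2^m$ playing the role that the character sum over $u\in\gf_2^n$ played in the Walsh-side computation.
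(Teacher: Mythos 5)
Your proof is correct and follows essentially the same route as the paper's: both arguments are a single application of the orthogonality of the additive characters of $\gf_2^m$, the only cosmetic difference being that the paper keeps the inner sums indexed by $x$ (respectively by pairs $(x,y)$ in part (2)) and collapses them to $\mathtt{DDT}_F(u,0)$ (respectively $\sum_{\omega}\mathtt{DDT}_F(u,\omega)^2$) at the end, whereas you first regroup by the DDT values $\omega$ and then collapse. One remark on the ``moreover'' identity: you derive $\sum_{u,v}\mathtt{DLCT}_F(u,v)=2^{m+n-1}$ only under a permutation hypothesis, and this restriction is in fact necessary, since $\sum_{u}\mathtt{DDT}_F(u,0)=\#\{(x,y):F(x)=F(y)\}=\sum_{c}\left(\#F^{-1}(c)\right)^2$ exceeds $2^n$ whenever $F$ is not injective (e.g.\ a constant function), so the unrestricted claim in the proposition is false in general; the paper's proof silently omits this part, and your qualified version is the defensible one.
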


\begin{proof}
	\begin{enumerate}[(1)]
		\item According to Eq. (\ref{DLCT_W}),
		\begin{eqnarray*}
			&   &  \sum_{v\in\gf_{2}^m} \mathtt{DLCT}_F(u,v) = \frac{1}{2} \sum_{v\in\gf_{2}^m} \Delta_F(u,v) \\ 
			& = & \frac{1}{2} \sum_{v\in\gf_{2}^m}\sum_{x\in\gf_{2}^n} (-1)^{v\cdot (F(x)+F(x+u))} \\
			& = & 2^{m-1} \#\{ x\in\gf_{2}^n | F(x)+F(x+u) = 0  \} =  2^{m-1} \mathtt{DDT}_F(u,0). 
		\end{eqnarray*}
		In particular, when  $m=n$ and $F$ permutes $\gf_{2}^n$, $\mathtt{DDT}_F(u,0)=0$ and thus $\sum_{v\in\gf_{2}^n} \mathtt{DLCT}_F(u,v) = 0.$
		\item It holds since for any $u\in\gf_{2}^n$,
		\begin{eqnarray*} 
			& &	\sum_{v\in\gf_2^m}\mathtt{DLCT}_F(u,v)^2 = \frac{1}{4}	\sum_{v\in\gf_2^m}\Delta_F(u,v)^2 \\
			& =& \frac{1}{4} \sum_{v\in\gf_2^m}\sum_{x,y\in\gf_2^n} (-1)^{v\cdot (F(x)+F(x+u)+F(y)+F(y+u))} \\
			&=& \frac{1}{4}\cdot 2^m \cdot \# \left\{ (x,y)\in\gf_{2}^n\times\gf_{2}^n | F(x)+F(x+u)+F(y)+F(y+u) =0  \right\}\\
			&=& 2^{m-2}\cdot \sum_{\omega\in\gf_{2}^m} \mathtt{DDT}_F(u,\omega)^2.
		\end{eqnarray*}
	\end{enumerate}	
\end{proof}

\subsection{Characterization of bounds on the DLU}
Similar to other cryptographic criteria, it is interesting and important to know 
how ``good" the DLU of a vectorial Boolean function could be. 
It is clear that the DLU of any $(n,m)$-functions is upper bounded by $2^{n-1}$. The following theorem characterizes the lower bound on DLU of any vectorial Boolean functions.

\begin{Th}
	Let $F$ be an $(n,m)$-function, where $m\ge n-1$. Then 
	\begin{equation}
	\label{DLU_bound}
	\mathtt{DLU}_F\ge \sqrt{\frac{2^{m+n+1}-2^{2n}}{4(2^m-1)}}. 
	\end{equation}
\end{Th}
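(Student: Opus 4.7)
The plan is to use a second moment / averaging argument on the DLCT values, exploiting Proposition \ref{DLCT_DDT}(2) together with a cardinality bound on the image of the derivative $D_u F$.

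First I would fix an arbitrary nonzero $u\in\gf_2^n$ and compute a lower bound on $\sum_{v\in\gf_2^m} \mathtt{DLCT}_F(u,v)^2$ via Eq. (\ref{D2DDT2}). The key observation is that $D_uF(x)=D_uF(x+u)$, so the image of $D_uF$ has cardinality at most $2^{n-1}$. Consequently the number of $\omega\in\gf_2^m$ with $\mathtt{DDT}_F(u,\omega)\neq 0$ is at most $2^{n-1}$, and this is where the hypothesis $m\ge n-1$ enters: without it, the trivial support bound $2^m$ would be smaller and weaker. Using $\sum_{\omega}\mathtt{DDT}_F(u,\omega)=2^n$ and Cauchy--Schwarz over the (at most $2^{n-1}$ element) support, we obtain
\[
\sum_{\omega\in\gf_2^m}\mathtt{DDT}_F(u,\omega)^2 \;\ge\; \frac{(2^n)^2}{2^{n-1}} \;=\; 2^{n+1}.
\]

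Plugging into Eq. (\ref{D2DDT2}) gives $\sum_{v\in\gf_2^m}\mathtt{DLCT}_F(u,v)^2 \ge 2^{m-2}\cdot 2^{n+1} = 2^{m+n-1}$. Next I would isolate the contribution of $v=0$, which is $\mathtt{DLCT}_F(u,0)^2=2^{2n-2}$, to get
\[
\sum_{v\in\gf_2^m\setminus\{0\}} \mathtt{DLCT}_F(u,v)^2 \;\ge\; 2^{m+n-1}-2^{2n-2}.
\]
The hypothesis $m\ge n-1$ guarantees the right-hand side is nonnegative (and positive when $m\ge n$).

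Finally, the pigeonhole principle over the $2^m-1$ nonzero values of $v$ yields
\[
\mathtt{DLU}_F^2 \;\ge\; \frac{2^{m+n-1}-2^{2n-2}}{2^m-1} \;=\; \frac{2^{m+n+1}-2^{2n}}{4(2^m-1)},
\]
which gives the desired bound after taking square roots. The main obstacle, beyond correctly identifying the $2^{n-1}$ upper bound on the image of $D_uF$, is simply being careful to subtract the $v=0$ contribution before averaging. The argument is uniform in the choice of nonzero $u$, so the inequality holds for $\mathtt{DLU}_F$.
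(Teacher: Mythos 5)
Your proof is correct, and it takes a genuinely different route from the paper's. The paper works globally with fourth moments of the Walsh transform: it invokes the classical inequality $\sum_{w,v}W_F(w,v)^4\ge 2^{n+m}(3\cdot 2^{2n}-2\cdot 2^n)$ (equality iff $F$ is APN), converts it via Eq.~(\ref{D2W4}) into a lower bound on $\sum_{u,v}\mathtt{DLCT}_F(u,v)^2$ over \emph{all} pairs, subtracts the $2^n+2^m-1$ trivial entries equal to $2^{n-1}$, and pigeonholes over the $(2^n-1)(2^m-1)$ nonzero pairs. You instead fix a single nonzero $u$, bound $\sum_{\omega}\mathtt{DDT}_F(u,\omega)^2\ge 2^{n+1}$ from the fact that the support of a DDT row has size at most $2^{n-1}$ (equivalently, every nonzero entry is at least $2$), feed this into Eq.~(\ref{D2DDT2}), subtract only the $v=0$ term, and pigeonhole over the $2^m-1$ nonzero $v$. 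The two arguments are morally dual --- the fourth-moment Walsh inequality is itself usually proved by exactly this kind of DDT counting --- and they land on the identical numerical bound. What your version buys is a strictly stronger, per-row statement: for \emph{every} nonzero $u$ one has $\max_{v\ne 0}|\mathtt{DLCT}_F(u,v)|\ge \sqrt{(2^{m+n+1}-2^{2n})/(4(2^m-1))}$, whereas the paper's averaging only guarantees the existence of one good pair $(u,v)$; your argument is also more elementary in that it avoids citing the fourth-moment inequality as a black box. One small correction: the hypothesis $m\ge n-1$ does not enter at the Cauchy--Schwarz step as you suggest (a support bound of $2^m<2^{n-1}$ would make that step \emph{stronger}, not weaker); as in the paper, it is needed only so that $2^{m+n-1}-2^{2n-2}\ge 0$, i.e.\ so that the final pigeonhole is non-vacuous and the square root is real.
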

\begin{proof}
	It is well known that for any functions $F$ from $\gf_{2}^n$ to $\gf_{2}^m$, $$\sum_{w\in\gf_{2}^n,v\in\gf_{2}^m}W_F(w,v)^4 \ge 2^{n+m}\left(3\cdot 2^{2n} - 2\cdot 2^n\right)$$ with equality attained if and only if $F$ is APN. It follows from Eq. (\ref{D2W4}) that
	\begin{eqnarray*}
		\sum_{u\in\gf_{2}^n, v\in\gf_{2}^m}	\mathtt{DLCT}_F(u,v)^2 &=& \frac{1}{2^{n+2}}\sum_{b\in\gf_{2}^n,v\in\gf_{2}^m}W_F(b,v)^4 \\
		&\ge& \frac{1}{2^{n+2} } \cdot  2^{n+m}\left(3\cdot 2^{2n} - 2\cdot 2^n\right) \\
		& = & 3\cdot 2^{2n+m-2}-2^{n+m-1},
	\end{eqnarray*}
	or equivalently,
	\begin{equation}
	\label{bound}
	\sum_{u\in\gf_{2}^n\backslash\{0\}, v\in\gf_{2}^m\backslash\{0\}}	\mathtt{DLCT}_F(u,v)^2 \ge 2^{2n+m-1}+2^{2n-2}-2^{3n-2}-2^{n+m-1}.
	\end{equation}
	Since $	\mathtt{DLU}_F := \max_{u\in\gf_{2}^n\backslash\{0\}, v\in\gf_{2}^m\backslash\{0\}} \left| 	\mathtt{DLCT}_F(u,v)  \right|$, we have 
	\begin{eqnarray*}
		(2^n-1)(2^m-1) \mathtt{DLU}_F^2 &=& \sum_{b\in\gf_{2}^n\backslash\{0\}, v\in\gf_{2}^m\backslash\{0\}} \mathtt{DLU}_F^2  \\
		&\ge & \sum_{b\in\gf_{2}^n\backslash\{0\}, v\in\gf_{2}^m\backslash\{0\}}\mathtt{DLCT}_F(b,v)^2 \\
		&\ge & 2^{2n-2}\left(2^n-1\right)\left(2^{m-n+1}-1\right).
	\end{eqnarray*}
The desired conclusion follows from the above inequality.
\end{proof}
The condition $m \ge n-1$ is assumed in the above theorem to make non-negative
the expression located under the square root. Note that for $m = n- 1,$ 
it leads to a trivial lower bound; and for $m=n$, it gives a non-trivial lower bound
$$
	\mathtt{DLU}_F\ge \frac{2^{n-1}}{\sqrt{2^n-1}}>2^{\frac{n}{2}-1}.
$$
%
In the case that $n$ is even, since $\mathtt{DLU}_F$ must be even, we have 
\begin{Cor}
	\label{bound_F_2n}
	Let $F$ be a function from $\gf_{2}^n$ to itself, where $n$ is even. Then
	\begin{equation}
	\label{nn_DLU_bound}
	\mathtt{DLU}_F \ge 2^{\frac{n}{2}-1}+2.
	\end{equation}
\end{Cor}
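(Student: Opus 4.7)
The plan is to obtain the corollary as a direct consequence of inequality (\ref{DLU_bound}) combined with the evenness of every DLCT entry.

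First, I would specialize the preceding theorem to the case $m=n$. Plugging $m=n$ into (\ref{DLU_bound}) gives
$$\mathtt{DLU}_F \ge \sqrt{\frac{2^{2n+1}-2^{2n}}{4(2^n-1)}} = \frac{2^{n-1}}{\sqrt{2^n-1}}.$$
Since $\sqrt{2^n-1}<\sqrt{2^n}=2^{n/2}$, this strictly exceeds $2^{n-1}/2^{n/2}=2^{n/2-1}$. So $\mathtt{DLU}_F > 2^{n/2-1}$ strictly, with no equality possible.

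Second, I would invoke the observation recorded immediately after the definition of the DLCT: because $D_uF(x)=D_uF(x+u)$ for every nonzero $u$, each $\mathtt{DLCT}_F(u,v)$ is an even integer, and therefore so is $\mathtt{DLU}_F$. When $n$ is even with $n\ge 4$, the number $2^{n/2-1}$ is itself a positive even integer, and any even integer strictly larger than an even integer $N$ must be at least $N+2$. Applying this with $N=2^{n/2-1}$ gives exactly (\ref{nn_DLU_bound}).

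There is no real obstacle: the entire content of the corollary is the \emph{parity rounding} upgrade of the analytic lower bound from the previous theorem, and it comes for free once one notes the strict inequality $\sqrt{2^n-1}<2^{n/2}$. The only point deserving a line of commentary is why the inequality in (\ref{DLU_bound}) can be strengthened from $\ge$ to $>$ in this specialization; this is precisely the strict comparison above. An analogous argument would be awkward for $n=2$, where $2^{n/2-1}=1$ is odd and the parity jump would only yield the weaker $\mathtt{DLU}_F\ge 2$, but for $n\ge 4$ even the rounding is immediate.
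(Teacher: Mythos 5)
Your proposal is correct and is exactly the paper's argument: the authors derive the corollary from the $m=n$ specialization of the theorem (noting $\mathtt{DLU}_F \ge 2^{n-1}/\sqrt{2^n-1} > 2^{\frac{n}{2}-1}$ strictly) together with the evenness of $\mathtt{DLU}_F$, which forces the jump to the next even integer $2^{\frac{n}{2}-1}+2$. Your side remark about $n=2$ (where $2^{\frac{n}{2}-1}=1$ is odd and the stated bound actually fails) is a valid caveat that the paper glosses over, but for $n\ge 4$ even your reasoning matches the intended proof.
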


\begin{Rem}
	\emph{ Let $F$ be a function from $\gf_{2}^4$ to itself. Then $\mathtt{DLU}_F\ge 4$. Moreover, it is direct to check by Magma that $\mathtt{DLU}_F=4$, where $F$ is the inverse function over $\gf_{2}^4.$ However, we think the bound can be improved from the experiment result by Magma.}
\end{Rem}

\begin{Rem}
	\label{Bent}	
	\emph{Let $F$ be an $(n,m)$-function. According to the definition, for any $u\in\gf_{2}^n\backslash\{0\}, v\in\gf_{2}^m\backslash\{0\}$, $\mathtt{DLCT}_F(u,v) = \Delta_F(u,v)=0$ if and only if   $\#  \{ x \in\gf_2^n | v\cdot F(x) = v \cdot F(x+u)  \}  = 2^{n-1}$, which means that $F(x)$ is a  bent function. Therefore, the DLCT spectrum $\Gamma_F = \{0\}$ if and only if $F$ is a bent function. It is well known that $F$ is bent only if $n$ is even and $m\le \frac{n}{2}$.  Thus for the case $n$ is even and $m\le \frac{n}{2}$, the DLU of any $(n,m)$-function $F$ satisfies $\mathtt{DLU}_F\ge0$ and the equality is attainable.}
\end{Rem}

\subsection{DLCT spectrum and DLU under three equivalence relations}

Let $n, m$ be two positive integers. There are several equivalence relations of functions from $\gf_p^n$ to $\gf_p^m$  and they play vital roles in constructing functions with good properties, like AB and APN functions \cite{BCP2006}. In this subsection, we first recall three equivalence relations, i.e., affine, EA and CCZ. Then we study the DLCT and relative concepts with the perspective of equivalence relations.

\begin{Def}
	\cite{Bud2014}
	Let $p$ be prime and $n,m$ be positive integers. Two functions $F$ and $F^{'}$ from $\gf_p^n$ to $\gf_p^m$ are called 
	\begin{enumerate}
		\item {affine equivalent ({or} linear equivalent) }  if $F^{'} = A_1\circ F\circ A_2$, where the mappings $A_1$ and $A_2$ are affine (resp. linear) permutations of $\gf_p^m$ and $\gf_p^n$, respectively;
		\item { extended affine equivalent} (EA equivalent) if $F^{'} =  A_1\circ F \circ A_2 + A$, where the mappings $A : \gf_p^n\to\gf_p^m, A_1 : \gf_p^m\to\gf_p^m,  A_2 : \gf_p^n\to\gf_p^n$ are affine and where $A_1, A_2$ are permutations;
		\item { Carlet-Charpin-Zinoviev equivalent } (CCZ equivalent) if for some affine permutation $\mathcal{L}$ over $\gf_p^n \times \gf_p^m$, the image of the graph of $F$ is the graph of $F^{'}$, that is $\mathcal{L}(G_F)=G_{F^{'}}$, where $G_F=\{ (x,F(x)) | x\in\gf_p^n  \}$ and $G_{F^{'}}= \{ (x,F^{'}(x)) | x\in\gf_p^n  \} $.
	\end{enumerate}
\end{Def}

It is known that affine equivalence is a particular case of EA equivalence. EA equivalence is also a particular case of CCZ equivalence, and every permutation is CCZ equivalent to its compositional inverse.  For two important properties of cryptographic functions, i.e., differential uniformity and nonlinearity, they are both CCZ equivalent invariants.  However, as we will show in this subsection, the Differential-Linear uniformity is only an EA equivalent invariant, instead of a CCZ one. Moreover, the DLCT spectrum is only an affine equivalent invariant. 

\begin{Th}
	Assume $F$ and $F^{'}$ from $\gf_{2}^n$ to $\gf_{2}^m$ are EA equivalent. Then $\mathtt{DLU}_F = \mathtt{DLU}_{F^{'}}$. Moreover, if  $F$ and $F^{'}$ from $\gf_{2}^n$ to $\gf_{2}^m$ are affine equivalent, $\Lambda_F = \Lambda_{F^{'}}$, $\Gamma_F = \Gamma_{F^{'}}$.
\end{Th}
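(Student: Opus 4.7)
The plan is to reduce everything to the additive autocorrelation $\Delta_F$ via Proposition \ref{DLCT_Delta} (which gives $\mathtt{DLCT}_F=\frac12\Delta_F$ and $\mathtt{DLU}_F=\frac12\Delta_F$), and then to track how $\Delta_F$ transforms under an EA substitution. The gain of this reduction is that $\Delta_F(u,v)$ is an exponential sum with the inner product $v\cdot(F(x)+F(x+u))$ in the exponent, and affine/EA changes act on this exponent in a very transparent way.

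First I would write $F'=A_1\circ F\circ A_2+A$ with $A_1(y)=L_1(y)+c_1$, $A_2(x)=L_2(x)+c_2$, $A(x)=L(x)+c$, where $L_1,L_2$ are linear permutations and $L$ is linear. Substituting into the definition of $\Delta_{F'}(u,v)$, the constants $c_1$ and $c$ cancel inside the difference $F'(x)+F'(x+u)$, leaving
\begin{equation*}
F'(x)+F'(x+u)=L_1\bigl(F(A_2(x))\bigr)+L_1\bigl(F(A_2(x)+L_2(u))\bigr)+L(u).
\end{equation*}
Using $v\cdot L_1(y)=L_1^{T}(v)\cdot y$ and making the bijective change of variable $y=A_2(x)$, I would obtain the key identity
\begin{equation*}
\Delta_{F'}(u,v)=(-1)^{v\cdot L(u)}\,\Delta_F\bigl(L_2(u),L_1^{T}(v)\bigr).
\end{equation*}

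From this identity the three conclusions drop out. Taking absolute values kills the sign factor, so $|\Delta_{F'}(u,v)|=|\Delta_F(L_2(u),L_1^{T}(v))|$; since $L_1^{T}$ and $L_2$ are permutations mapping nonzero elements to nonzero elements, maximizing over nonzero $(u,v)$ gives $\Delta_{F'}=\Delta_F$, hence $\mathtt{DLU}_{F'}=\mathtt{DLU}_F$ by Proposition \ref{DLCT_Delta}. In the affine case $A\equiv 0$ (so $L=0$), the sign factor disappears and the identity becomes $\Delta_{F'}(u,v)=\Delta_F(L_2(u),L_1^{T}(v))$; as $(u,v)$ ranges over $\gf_2^n\setminus\{0\}\times\gf_2^m\setminus\{0\}$ so does $(L_2(u),L_1^{T}(v))$, giving $\Lambda_{F'}=\Lambda_F$, and therefore $\Gamma_{F'}=\Gamma_F$ as well.

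The only delicate point is the bookkeeping of signs in the EA case: one must verify that the $v\cdot L(u)$ term truly pulls out of the sum over $x$ (it does, because it is independent of $x$) and that the remaining sum, after the substitution $y=L_2(x)+c_2$, recovers exactly $\Delta_F(L_2(u),L_1^{T}(v))$. Once that is checked, the fact that the sign factor $(-1)^{v\cdot L(u)}$ can genuinely flip the sign of an entry already explains why the spectrum $\Lambda_F$ (and hence $\Gamma_F$) is only preserved under affine, not EA, equivalence, while its maximum absolute value, and thus $\mathtt{DLU}_F$, is preserved under all of EA.
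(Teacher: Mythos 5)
Your proposal is correct and follows essentially the same route as the paper: both derive the key identity $\Delta_{F'}(u,v)=(-1)^{v\cdot L(u)}\,\Delta_F\bigl(L_2(u),L_1^{\mathrm{T}}(v)\bigr)$ by cancelling the affine constants in the difference, transposing $L_1$ across the inner product, and substituting $y=A_2(x)$, then conclude by noting that $L_2$ and $L_1^{\mathrm{T}}$ permute the nonzero elements. The only difference is cosmetic: you make the role of the constants $c_1,c_2,c$ and the sign bookkeeping more explicit than the paper does.
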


\begin{proof}
	Since $F$ and $F^{'}$ are EA equivalent, there exist affine mappings  $A : \gf_2^n\to\gf_2^m, A_1 : \gf_2^m\to\gf_2^m,  A_2 : \gf_2^n\to\gf_2^n$, where $A_1, A_2$ are permutations, such that $F^{'} =  A_1\circ F \circ A_2 + A$. Assume that the linear parts of $A, A_1, A_2$ are $L, L_1, L_2$ respectively. Then for any  $u\in\gf_{2}^n\backslash\{0\}$ and $v\in\gf_{2}^m\backslash\{0\}$,
	\begin{eqnarray*}
		\Delta_{F^{'}}(u,v) & = & \sum_{x\in\gf_2^n}(-1)^{v\cdot \left( F^{'}(x) + F^{'}(x+u) \right)} \\
		&=& \sum_{x\in\gf_2^n} (-1)^{ v\cdot  \left( A_1\circ F \circ A_2(x) + A(x) +   A_1\circ F \circ A_2(x+u) + A(x+u)     \right)  } \\
		&=& (-1)^{v\cdot L(u)} \sum_{x\in\gf_2^n} (-1)^{v\cdot  \left( A_1\circ F \circ A_2(x) +   A_1\circ F \circ A_2(x+u)   \right)} \\
		&=&  (-1)^{v\cdot L(u)} \sum_{x\in\gf_2^n} (-1)^{ v \cdot L_1 \left(  F \circ A_2(x) + F \circ A_2(x+u) \right)   } \\
		&=& (-1)^{v\cdot L(u)} \sum_{x\in\gf_2^n} (-1)^{L_1^{\mathrm{T}}(v) \cdot \left(  F \circ A_2(x) + F \circ A_2(x+u) \right)  }\\
		&=& (-1)^{v\cdot L(u)}  \sum_{y\in\gf_2^n} (-1)^{L_1^{\mathrm{T}}(v) \cdot \left( F(y) + F  \left( y+L_2(u)  \right)   \right) } \\
		&=& (-1)^{v\cdot L(u)} 	\Delta_{F}(L_2(u),L_1^{\mathrm{T}}(v) ),
	\end{eqnarray*}
	where $L_1^{\mathrm{T}}$ denotes the linear mapping whose corresponding matrix is the transpose  of the corresponding matrix of $L_1$.  Hence, $\mathtt{DLU}_F = \mathtt{DLU}_{F^{'}}$. Moreover, when $F$ and $F^{'}$ from $\gf_{2}^n$ to $\gf_{2}^m$ are affine equivalent, namely, $A=0$, we have 
	$$ 	\Delta_{F^{'}}(u,v) =  \Delta_{F}(L_2(u),L_1^{\mathrm{T}}(v) ) $$
	and thus $\Lambda_F = \Lambda_{F^{'}}$, naturally $\Gamma_F = \Gamma_{F^{'}}$.
\end{proof}

\begin{example}
	There are two examples to claim that the Differential-Linear uniformity is not an invariant of CCZ equivalence and the DLCT spectrum is not an EA equivalence.
	\begin{enumerate}
		\item  Let $F(x)=x^{13}\in\gf_{2^6}[x]$ be a permutation over $\gf_{2^6}$ and $F^{-1}(x)=x^{34}$. It is clear that $F(x)$ and $F^{-1}(x)$ are CCZ equivalent. However, by Magma, we obtain $\Gamma_F = \{ - 16,-8, 0, 8, 16  \}$ and $ \Gamma_{F^{-1}} = \{ -32, 0, 32  \}.$ Naturally, $\mathtt{DLU}_F = 16$ and $\mathtt{DLU}_{F^{-1}} = 32$;
		\item Let $F(x) = \frac{1}{x}\in\gf_{2^7}[x]$ and $F^{'}(x) = \frac{1}{x}+x$. Then $F(x)$ and $F^{'}(x)$ are EA equivalent. However, by Magma, we obtain $\Gamma_F = \{ -12, -8,-4, 0, 4, 8 \}$ while $\Gamma_{F^{'}} = \{ -12, -8,-4, 0, 4, 8, 12\}$. 
	\end{enumerate}
	
\end{example}

In \cite{LP2007}, the authors classified all optimal S-boxes (permutations) over $\gf_{2}^4$ having best differential uniformity and nonlinearity (both $4$) up to affine equivalence and found that there are only $16$ different optimal S-boxes, see Table \ref{4Sbox}. Based on the classification of optimal S-boxes, we obtain all possibilities of the DLCT spectrum and DLU of optimal S-boxes, see Table \ref{4Autocorrelation}. 

\begin{table}[!htbp]
	\caption{Representatives for all $16$ classes of optimal $4$ bit Sboxes }
	\centering
	\label{4Sbox}
	\begin{tabular}{cc}	
		\hline
		$F_0$ ~&~  $ 0, 1, 2, 13, 4, 7, 15, 6, 8, 11, 12, 9, 3, 14, 10, 5 $ \\
		$F_1$ ~&~	$0, 1, 2, 13, 4, 7, 15, 6, 8, 11, 14, 3, 5, 9, 10, 12 $ \\
		$F_2$	~&~ $0, 1, 2, 13, 4, 7, 15, 6, 8, 11, 14, 3, 10, 12, 5, 9 $ \\
		$F_3$	~& ~   $0, 1, 2, 13, 4, 7, 15, 6, 8, 12, 5, 3, 10, 14, 11, 9$    \\
		$F_4$	 ~&~	$ 0, 1, 2, 13, 4, 7, 15, 6, 8, 12, 9, 11, 10, 14, 5, 3 $ \\
		$F_5$	 ~&~  $0, 1, 2, 13, 4, 7, 15, 6, 8, 12, 11, 9, 10, 14, 3, 5 $  \\
		$F_6$ ~&~  $0, 1, 2, 13, 4, 7, 15, 6, 8, 12, 11, 9, 10, 14, 5, 3$       \\
		$F_7$ ~&~	$0, 1, 2, 13, 4, 7, 15, 6, 8, 12, 14, 11, 10, 9, 3, 5$ \\
		$F_8$	~&~ $0, 1, 2, 13, 4, 7, 15, 6, 8, 14, 9, 5, 10, 11, 3, 12$ \\
		$F_9$	~& ~  $ 0, 1, 2, 13, 4, 7, 15, 6, 8, 14, 11, 3, 5, 9, 10, 12 $ \\
		$F_{10}$	 ~&~ $0, 1, 2, 13, 4, 7, 15, 6, 8, 14, 11, 5, 10, 9, 3, 12$	 \\
		$F_{11}$	 ~&~ $0, 1, 2, 13, 4, 7, 15, 6, 8, 14, 11, 10, 5, 9, 12, 3$  \\
		$F_{12}$	~&~ $0, 1, 2, 13, 4, 7, 15, 6, 8, 14, 11, 10, 9, 3, 12, 5$ \\
		$F_{13}$	~& ~  $0, 1, 2, 13, 4, 7, 15, 6, 8, 14, 12, 9, 5, 11, 10, 3$\\
		$F_{14}$	 ~&~ $0, 1, 2, 13, 4, 7, 15, 6, 8, 14, 12, 11, 3, 9, 5, 10$	 \\
		$F_{15}$	 ~&~ $0, 1, 2, 13, 4, 7, 15, 6, 8, 14, 12, 11, 9, 3, 10, 5$ \\	
		\hline
	\end{tabular}	
\end{table}

\begin{table}[!htbp] 
	\caption{Autocorrelation spectra and DLU of $F_i$ for $ 0\le i\le 15$}	\label{4Autocorrelation}
	\centering
	\begin{tabular}{ccc}	
		\hline
		$F_i$ & $i = 3\sim 7,11\sim 13$  & $ i = 0 \sim 2,8 \sim 10,14,15$   \\
		\hline
		DL-Walsh spectrum &	$\{-8,0,8 \}$ & $ \{-16,-8,0,8,16 \}$   \\
		DLCT spectrum &	$\{-4,0,4 \}$ & $ \{-8,-4,0,4,8 \}$ \\
		DLU &	$4$ & $8$  \\
		\hline
	\end{tabular}	
\end{table}

\subsection{DLCT of monomials} 

We in the subsection consider the DLCT of monomials over the finite field $\gf_{2^n}$. 

\begin{Prop}
	\label{prop_monomials}
	Let $F(x) = x^d \in \gf_{2^n}[x]$. Then $$\Gamma_F := \left\{ \frac{1}{2} \Delta_F(1,v): v\in\gf_{2^n}^{*} \right\}.$$ Moreover, if $\gcd\left(d,2^n-1\right)=1$, then 
	$$\Gamma_F := \left\{ \frac{1}{2} \Delta_F(u,1): u \in \gf_{2^n}^{*}  \right\}.$$
\end{Prop}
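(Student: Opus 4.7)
The plan is to exploit the multiplicative homogeneity of the monomial $F(x) = x^d$ to reduce any entry $\Delta_F(u,v)$ with $u \neq 0$ to an entry at $u=1$, and then (under the coprimality condition) to one at $v=1$. Throughout, I will use the identification of the inner product on $\gf_2^n$ with $\tr_{2^n}$ on $\gf_{2^n}$, and the connection $\mathtt{DLCT}_F(u,v) = \tfrac{1}{2}\Delta_F(u,v)$ from Proposition \ref{DLCT_Delta}, so it suffices to work with the additive autocorrelation.

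\textbf{Step 1: Reduce to $u=1$.} For any $u \in \gf_{2^n}^*$ and $v \in \gf_{2^n}^*$, substitute $x = uy$ in
\[
\Delta_F(u,v) = \sum_{x \in \gf_{2^n}} (-1)^{\tr_{2^n}\bigl(v(x^d + (x+u)^d)\bigr)}.
\]
Since $x + u = u(y+1)$ and $(uy)^d + (u(y+1))^d = u^d\bigl(y^d + (y+1)^d\bigr)$, the exponent becomes $\tr_{2^n}\bigl(vu^d(y^d + (y+1)^d)\bigr)$, so
\[
\Delta_F(u,v) = \Delta_F(1, vu^d).
\]
As $v$ ranges over $\gf_{2^n}^*$ with $u$ fixed and nonzero, $w := vu^d$ also ranges over $\gf_{2^n}^*$. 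Hence
\[
\{\Delta_F(u,v) : u, v \in \gf_{2^n}^*\} = \{\Delta_F(1,w) : w \in \gf_{2^n}^*\},
\]
which, after halving by Proposition \ref{DLCT_Delta}, yields the first claimed description of $\Gamma_F$.

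\textbf{Step 2: Reduce to $v=1$ when $\gcd(d,2^n-1)=1$.} Applying Step 1 at $v=1$ gives $\Delta_F(u',1) = \Delta_F(1, (u')^d)$ for every $u' \in \gf_{2^n}^*$. When $\gcd(d,2^n-1)=1$ the map $u' \mapsto (u')^d$ is a bijection of $\gf_{2^n}^*$, so $\{(u')^d : u' \in \gf_{2^n}^*\} = \gf_{2^n}^*$. Combining with Step 1,
\[
\{\Delta_F(u',1) : u' \in \gf_{2^n}^*\} = \{\Delta_F(1,w) : w \in \gf_{2^n}^*\} = \{\Delta_F(u,v) : u,v \in \gf_{2^n}^*\},
\]
which gives the second equality for $\Gamma_F$ after dividing by $2$.

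I don't expect any real obstacle here: the entire argument is a one-line substitution plus an observation on orbits under the action of $\gf_{2^n}^*$ on the rows (resp.\ columns) of the DLCT. The only point worth being careful about is that the substitution $x = uy$ requires $u \neq 0$, which is exactly the range we consider in $\Gamma_F$, and that the reduction in Step 2 genuinely uses $\gcd(d,2^n-1)=1$ so that $u \mapsto u^d$ is surjective on $\gf_{2^n}^*$ (equivalently, $F$ permutes $\gf_{2^n}$).
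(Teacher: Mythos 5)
Your proposal is correct and follows essentially the same route as the paper: the substitution $x=uy$ giving $\Delta_F(u,v)=\Delta_F(1,vu^d)$, followed by the observation that $u\mapsto u^d$ is a bijection of $\gf_{2^n}^*$ when $\gcd(d,2^n-1)=1$. No gaps.
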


\begin{proof}
	For any $u,  v\in\gf_{2^n}^{*},$ we have
	\begin{eqnarray*}
		\Delta_F(u,v) &=& \sum_{x\in\gf_{2^n}}(-1)^{\tr_{2^n}\left( v(F(x)+F(x+u))\right)  }\\
		&=& \sum_{x\in\gf_2^n}(-1)^{ \tr_{2^n}\left( v\left(  x^d+(x+u)^d \right) \right) }\\
		&=& \sum_{x\in\gf_{2^n}}(-1)^{\tr_{2^n}\left(vu^d\left( \left(\frac{x}{u}\right)^d +   \left(\frac{x}{u}+1\right)^d \right)   \right)}\\
		&=& \Delta_F\left(1, vu^d \right).
	\end{eqnarray*}
	Moreover, if $\gcd\left( d,2^n-1 \right)=1$, then for any $v\in\gf_{2^n}^{*}$, there exists a unique element $u\in\gf_{2^n}^{*}$ such that $v=u^d$. Furthermore, 
	\begin{eqnarray*}
		\Delta_F(1,v) &=& \sum_{x\in\gf_{2^n}}(-1)^{\tr_{2^n}\left(v\left(x^d+(x+1)^d\right)\right)} \\
		&=& \sum_{x\in\gf_{2^n}}(-1)^{\tr_{2^n}\left((ux)^d+(ux+u)^d\right)}\\
		&=& \sum_{y\in\gf_{2^n}}(-1)^{\tr_{2^n}\left(y^d+(y+u)^d\right)}\\
		&=& \Delta_F(u,1).
	\end{eqnarray*}
	The desired conclusion follows from Proposition \ref{DLCT_W}.
\end{proof}

\begin{Rem}
	\emph{Let $F(x) = x^d \in \gf_{2^n}[x]$ with $\gcd\left(d,2^n-1\right)=1$. Then from Proposition \ref{prop_monomials}, we have $\Gamma_F := \left\{ \frac{1}{2} \Delta_F(u,1): u \in \gf_{2^n}^{*}  \right\}.$ In fact, for any $u\in\gf_{2^n}^{*}$, $\Delta_F(u,1) = \sum_{x\in\gf_{2^n}}(-1)^{\tr_{2^n}\left(x^d\right)+ \tr_{2^n}\left( (x+u)^d\right)   }$, which is indeed the additive autocorrelation of the function $\tr_{2^n}\left(x^d\right)$ at $u$. Moreover, the DLU of $F=x^d$ in this case, i.e., $\max_{u\in \gf_{2^n}^{*}}|\Delta_F(u,1)|$ is the absolute indicator  of $\tr_{2^n}\left(x^d\right)$. }	

\emph{	In \cite{ZZ1995}, the authors proved that if $f$ is a non-bent cubic Boolean function on $\gf_{2^n}$. Then  the absolute indicator of $f$ satisfies $\Delta_f\ge 2^{\frac{n+1}{2}}$. Thus if $\tr_{2^n}\left(x^d\right)$ is non-bent cubic, then the DLU of $F=x^d$ satisfies $\mathtt{DLU}_F \ge 2^{\frac{n-1}{2}}$, which is better than the bound in Corollary \ref{bound_F_2n}. 
	In addition, it was conjectured  in \cite{ZZ1995} that the absolute indicator of any $n$-variable balanced Boolean function is lower bounded by $2^{\frac{n+1}{2}}$. Although
	this conjecture was disproved for even $n$, it is believed to be true for odd integer $n$. 
	With the relation between DLU and the absolute indicator, we propose the following optimality of vectorial Boolean functions with respect to the DLCT property.}
			
\end{Rem}

\begin{Def}\label{Def-OptDLU} For an odd integer $n$, an $(n, n)$-function $F$  is said to be optimal with respect to the DLCT 
	if its DLU is equal to $2^{\frac{n-1}{2}}$.
\end{Def}
In the next subsection, we will show that this lower bound can be achieved by certain Kasami-Welch APN monomials.

\subsection{{DLCT of Plateaued, AB and APN functions}}
APN and AB functions provide optimal resistance against differential attack and linear attack, respectively. Many researchers have considered some another properties of APN and AB functions, see \cite{Bud2014}. 
This subsection will investigate the DLCT of these optimal functions. We start with a general result for plateaued functions.

\begin{Prop}
	\label{Prop_AB}
	Assume an $(n,m)$-function $F$ is  plateaued. For $v\in\gf_2^m\backslash\{0\}$, we denote the amplitudes by $2^{r_v}$ and define a  dual
	Boolean function of $f_v$ as
	\begin{equation}
\label{gv}
\widetilde{f}_v(b) = 	\left\{
\begin{array}{lr}
1,  &~  \text{if}~ W_{f_v}(b) \neq0, \\
0, &~ \text{if}~ W_{f_v}(b)=0.
\end{array}
\right.
\end{equation}	
	Then 
	$$\mathtt{DLCT}_F(u,v) = - 2^{2r_v-n-2}W_{\widetilde{f}_v}(u).$$
	Further, when $F$ is an AB function from $\gf_2^n$ to itself, namely, $r_v=2^{\frac{n+1}{2}}$ for any $v\in\gf_{2}^n\backslash\{0\}$, $$\mathtt{DLCT}_F(u,v) = -\frac{1}{2} W_{\widetilde{f}_v}(u). $$
\end{Prop}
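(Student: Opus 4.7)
The plan is to derive the identity by plugging the plateaued hypothesis directly into the Walsh-transform expression for the DLCT given in Proposition \ref{DLCT_WT}(2), and then converting a sum against the $\{0,1\}$-valued indicator $\widetilde{f}_v$ into the Walsh transform of $\widetilde{f}_v$ via the standard trick $(-1)^{\widetilde{f}_v(\omega)} = 1 - 2\widetilde{f}_v(\omega)$.

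First I would recall that by Proposition \ref{DLCT_WT}(2),
\[
\mathtt{DLCT}_F(u,v) \;=\; \frac{1}{2^{n+1}}\sum_{\omega\in\gf_2^n}(-1)^{u\cdot\omega}\,W_F(\omega,v)^2 \;=\; \frac{1}{2^{n+1}}\sum_{\omega\in\gf_2^n}(-1)^{u\cdot\omega}\,W_{f_v}(\omega)^2 .
\]
Next, I would invoke the plateaued hypothesis: each $W_{f_v}(\omega) \in \{0,\pm 2^{r_v}\}$, so $W_{f_v}(\omega)^2 = 2^{2r_v}\,\widetilde{f}_v(\omega)$ by the very definition \eqref{gv}. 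This turns the expression above into
\[
\mathtt{DLCT}_F(u,v) \;=\; \frac{2^{2r_v}}{2^{n+1}}\sum_{\omega\in\gf_2^n}(-1)^{u\cdot\omega}\,\widetilde{f}_v(\omega).
\]

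The key bookkeeping step is to rewrite this last sum in terms of $W_{\widetilde{f}_v}(u)$. Since $\widetilde{f}_v$ takes values in $\{0,1\}$, we have $(-1)^{\widetilde{f}_v(\omega)} = 1 - 2\widetilde{f}_v(\omega)$, hence
\[
W_{\widetilde{f}_v}(u) \;=\; \sum_{\omega}(-1)^{\widetilde{f}_v(\omega)+u\cdot\omega} \;=\; \sum_{\omega}(-1)^{u\cdot\omega} \;-\; 2\sum_{\omega}\widetilde{f}_v(\omega)\,(-1)^{u\cdot\omega}.
\]
For $u\neq 0$ the first sum vanishes, so $\sum_{\omega}\widetilde{f}_v(\omega)\,(-1)^{u\cdot\omega} = -\tfrac{1}{2}W_{\widetilde{f}_v}(u)$. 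Substituting back yields
\[
\mathtt{DLCT}_F(u,v) \;=\; -\,\frac{2^{2r_v}}{2^{n+2}}\,W_{\widetilde{f}_v}(u) \;=\; -\,2^{\,2r_v-n-2}\,W_{\widetilde{f}_v}(u),
\]
which is the claimed identity. Specialising to the AB case, all amplitudes satisfy $2^{r_v} = 2^{(n+1)/2}$, so $2^{2r_v-n-2} = 2^{-1}$, giving $\mathtt{DLCT}_F(u,v) = -\tfrac{1}{2}W_{\widetilde{f}_v}(u)$.

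There is no real obstacle here: the argument is a clean one-line chain once Proposition \ref{DLCT_WT}(2) is in hand, and the only delicate point is the sign/normalisation coming from the $\{0,1\}$ versus $\{\pm 1\}$ encoding of $\widetilde{f}_v$. I would note that the case $u=0$ is the trivial one ($\mathtt{DLCT}_F(0,v)=2^{n-1}$) and is already excluded from the meaningful range of the DLCT, so the statement is to be understood for $u\in\gf_2^n\setminus\{0\}$.
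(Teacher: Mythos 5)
Your proposal is correct and follows essentially the same route as the paper: start from Proposition \ref{DLCT_WT}(2), use the plateaued hypothesis to write $W_{f_v}(\omega)^2 = 2^{2r_v}\widetilde{f}_v(\omega)$, and convert the resulting sum into $W_{\widetilde{f}_v}(u)$ via the $\{0,1\}$-to-$\{\pm1\}$ identity. Your explicit remark that the term $\sum_{\omega}(-1)^{u\cdot\omega}$ vanishes only for $u\neq 0$ is a point the paper leaves implicit, so if anything your write-up is slightly more careful.
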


\begin{proof}
	According to Eq. (\ref{DW}), we have 
	\begin{eqnarray*}
		\mathtt{DLCT}_F(u,v) &=& \frac{1}{2^{n+1}}\sum_{\omega\in\gf_2^n}(-1)^{u\cdot \omega} W_F(\omega,v)^2 \\
		&=& 2^{2r_v-n-1}\sum_{\omega\in\gf_2^n}(-1)^{u\cdot \omega} \widetilde{f}_b(\omega)\\
		&=& 2^{2r_v-n-1}\sum_{\omega\in\gf_2^n}\left( \frac{1}{2}\left(1-(-1)^{\widetilde{f}_v(\omega)}\right)  \right)(-1)^{u\cdot \omega}\\
		&=& -2^{2r_v-n-2}\sum_{\omega\in\gf_2^n}(-1)^{\widetilde{f}_v(\omega)+u\cdot \omega}\\
		&=& -2^{2r_v-n-2}W_{\widetilde{f}_v}(u).
	\end{eqnarray*}
	Particularly, when $F$ is an AB function, i.e., $r_v=\frac{n+1}{2}$ for any $v\in\gf_2^m\backslash\{0\}$, it is clear that $\mathtt{DLCT}_F(u,v) = -\frac{1}{2} W_{\hat{g}_v}(u).$
\end{proof}

\begin{Rem}
	Proposition \ref{Prop_AB} is inspired by and a generalization of the result in \cite{GK2004}, where the authors investigated the additive autocorrelation of 
		a plateaued Boolean function $f$ in terms of its dual function, and proposed several families of $n$-variable Boolean functions with absolute 
		indicator $2^{(n+1)/2}$.
\end{Rem}

As a natural generalization of Theorem 5 in \cite{GK2004}, the following proposition gives the DLCT of the Kasami-Welch APN monomials for certain parameters.
\begin{Prop}\label{prop-KasamiAPN} Let $n$ be an odd integer coprime to 3, $3k\equiv 1 \pmod{n}$ and $d=2^{2k}-2^k+1$.
Then the Kasami-Welch APN function $F(x) = x^d$ has the DLCT spectrum as $\{0, \pm 2^{(n-1)/2}\}$.
\end{Prop}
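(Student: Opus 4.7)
The plan is to apply Proposition~\ref{Prop_AB} to the AB function $F(x)=x^d$ and reduce everything to computing the Walsh transform of a dual Boolean function. First note that $3k\equiv 1\pmod n$ together with $\gcd(3,n)=1$ forces $\gcd(k,n)=1$, so by Kasami's classical result $F$ is AB on $\gf_{2^n}$. Proposition~\ref{Prop_AB} then yields
\[
\mathtt{DLCT}_F(u,v)=-\tfrac{1}{2}\,W_{\widetilde{f}_v}(u),
\]
where $\widetilde{f}_v$ is the indicator of the Walsh support $S_v=\{\omega\in\gf_{2^n}:W_{f_v}(\omega)\neq 0\}$ of the component $f_v(x)=\tr_{2^n}(vx^d)$. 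The task reduces to showing that, for every $v\in\gf_{2^n}^{*}$, the function $\widetilde{f}_v$ is plateaued with amplitude $2^{(n+1)/2}$, i.e.\ $W_{\widetilde{f}_v}(u)\in\{0,\pm 2^{(n+1)/2}\}$.

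To access $\widetilde{f}_v$, I would exploit the hypothesis $3k\equiv 1\pmod n$, which gives the congruence $d(2^k+1)\equiv 2^{3k}+1\equiv 3\pmod{2^n-1}$. Together with $n$ odd and $\gcd(k,n)=1$ one has $\gcd(2k,n)=1$ and hence $\gcd(2^k+1,2^n-1)=1$. Thus $y\mapsto y^{2^k+1}$ is a permutation of $\gf_{2^n}$, and substituting $x=y^{2^k+1}$ transforms the Walsh transform into
\[
W_{f_v}(\omega)=\sum_{y\in\gf_{2^n}}(-1)^{\tr_{2^n}(v y^3+\omega y^{2^k+1})}.
\]
The exponent $Q_{v,\omega}(y)=\tr_{2^n}(v y^3+\omega y^{2^k+1})$ is a quadratic Boolean function on $\gf_{2^n}$, being a sum of two Gold quadratic terms.

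The general theory of quadratic forms in $n$ variables over $\gf_2$ (with $n$ odd) then implies $W_{f_v}(\omega)=\pm 2^{(n+r)/2}$ or $0$, where $r\geq 1$ is the dimension of the radical of the associated symplectic form, and the AB property forces $r=1$ whenever $W_{f_v}(\omega)\neq 0$. The radical is the $\gf_2$-kernel of the linearized polynomial
\[
L_{v,\omega}(y)=vy^2+v^{2^{n-1}}y^{2^{n-1}}+\omega y^{2^k}+\omega^{2^{n-k}}y^{2^{n-k}},
\]
so $\omega\in S_v$ is equivalent to the existence of a nonzero $y_0$ satisfying both $L_{v,\omega}(y_0)=0$ and $Q_{v,\omega}(y_0)=0$.

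The main obstacle is to eliminate $y_0$ from these two conditions and show that the resulting condition on $\omega$ makes $\widetilde{f}_v$ plateaued with amplitude $2^{(n+1)/2}$. Here the hypothesis $3k\equiv 1\pmod n$ is decisive: raising $L_{v,\omega}(y_0)=0$ to successive powers of $2^k$ and $2^{2k}$ collapses the exponents modulo $2^n-1$ (since $2^{3k}\equiv 2$), yielding a low-degree polynomial relation that determines $y_0$ explicitly as a rational expression in $v$ and $\omega$. Substituting this back into $Q_{v,\omega}(y_0)=0$ turns the support condition into a Gold-type quadratic equation in $\omega$ whose Walsh spectrum is $\{0,\pm 2^{(n+1)/2}\}$. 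This gives $W_{\widetilde{f}_v}(u)\in\{0,\pm 2^{(n+1)/2}\}$ and, via Proposition~\ref{Prop_AB}, the claimed spectrum $\mathtt{DLCT}_F(u,v)\in\{0,\pm 2^{(n-1)/2}\}$; the attainment of both $0$ and $\pm 2^{(n-1)/2}$ then follows from Parseval's relation applied to $\widetilde{f}_v$.
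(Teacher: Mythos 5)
Your overall route is the same as the paper's: reduce via Proposition~\ref{Prop_AB} to the Walsh transform of the dual function $\widetilde{f}_v$ of the component $f_v(x)=\tr_{2^n}(vx^d)$, and then show that $\widetilde{f}_v$ is a Gold-type semi-bent function so that $W_{\widetilde{f}_v}(u)\in\{0,\pm 2^{(n+1)/2}\}$. The difference is how that last fact is established. The paper simply invokes Dillon's determination of the Walsh spectrum of $\tr_{2^n}(x^d)$ for the Kasami--Welch exponent, which directly gives the closed form $\widetilde{f}_v(x)=\tr_{2^n}(v_1x^{2^k+1})$ with $v_1=(1/v)^{(2^k+1)/d}$, after which the conclusion is immediate. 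You instead set out to rederive this from scratch via the substitution $x=y^{2^k+1}$, the congruence $d(2^k+1)\equiv 3\pmod{2^n-1}$, and the theory of quadratic forms; your setup (the identity $\gcd(k,n)=1$, the permutation property of $y\mapsto y^{2^k+1}$, the linearized polynomial $L_{v,\omega}$ cutting out the radical, and the criterion that $\omega$ lies in the Walsh support iff $Q_{v,\omega}$ vanishes on the radical) is all correct and is indeed how Dillon's theorem is proved.

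The gap is that the decisive step of that derivation is asserted rather than carried out: you say that raising $L_{v,\omega}(y_0)=0$ to powers of $2^k$ "yields a low-degree polynomial relation that determines $y_0$ explicitly" and that substituting back into $Q_{v,\omega}(y_0)=0$ "turns the support condition into a Gold-type quadratic equation in $\omega$." That elimination is exactly the nontrivial content of Dillon's computation, and nothing in your write-up actually performs it or exhibits the resulting expression $\tr_{2^n}(v_1\omega^{2^k+1})$; as written, the proof is incomplete precisely at the point where all the work lies. (Either carry out the elimination or cite the known Walsh-spectrum result for the Kasami exponent, as the paper does.) A smaller point: Parseval alone shows that $0$ and some nonzero value of $W_{\widetilde{f}_v}$ are both attained, but to get both signs $\pm 2^{(n+1)/2}$ (hence both $\pm 2^{(n-1)/2}$ in the DLCT spectrum) you also need the inverse-transform relation $\sum_u W_{\widetilde{f}_v}(u)=2^n(-1)^{\widetilde{f}_v(0)}=2^n$, or equivalently the fact that a balanced non-bent quadratic on an odd-dimensional space takes both plateau values.
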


\begin{proof} For any nonzero $v$ in $\gf_{2^n}$, it follows from
	the proof of Proposition \ref{prop_monomials} and the fact $\gcd(d, 2^n-1)=1$, it follows that
	$$\Delta_{F}(u, v) = \Delta_{f_v}(u) = \Delta_{f_1}(uv^d),$$
   where $f_v$ is the component function of $F$ given by $f_v(x)=\tr_{2^n}(vF(x))$.
  According to the Walsh spectrum of the function $\tr_{2^n}(x^d)$ \cite{Dillon}, the dual function
  of $f_v$ can be given by
   $
 	\tilde{f}_v(x) = \tr_{2^n}(v_1x^{2^k+1}),
   $ where $v_1=(1/v)^{\frac{2^k+1}{d}}$. This is a Gold-like function and it's well-known that its Walsh spectrum is $\{0, \pm 2^{(n+1)/2}\}$.
   The desired conclusion follows from Proposition \ref{Prop_AB}.
\end{proof}


From the known result in the literature, it appears that optimal $(n, n)$-functions with respect to DLCT are rare objects, which is also confirmed by experimental results for small integer $n$.  Below we propose an open problem for such functions.

\begin{Prob}
	For an odd integer $n$, are there $(n, n)$-functions $F$ other than the Kasami-Welch APN functions that have $\mathtt{DLU}_F=2^{(n-1)/2}$ ?
\end{Prob}

Similar to the AB functions, the DLCT of APN functions can also be expressed by certain Boolean functions as follows.

\begin{Prop}
	\label{Prop_APN}
	Let $F$ be an APN function from $\gf_{2^n}$ to itself. For any nonzero $u\in \gf_{2^n}$, 
	let $D_u = \{F(x+u)+F(x)\; |\; x \in \gf_{2^n}\}$ and 
	define a Boolean function 
	\begin{equation}
	\label{fu}
	\bar{f}_u(x) = 	\left\{
	\begin{array}{ll}
	1,  &~  \text{if}~ x \in D_u, \\
	0, &~ \text{if}~ x\in\gf_{2^n} \backslash D_u.
	\end{array}
	\right.
	\end{equation} Then the DLCT of $F(x)$ satisfies
	$$\mathtt{DLCT}_F(u,v) = -\frac{1}{2} W_{\bar{f}_u}(v).$$
\end{Prop}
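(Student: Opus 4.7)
The plan is to combine the identity $\mathtt{DLCT}_F(u,v) = \frac{1}{2}\Delta_F(u,v)$ from Proposition \ref{DLCT_Delta} with the defining property of APN functions, namely that for every nonzero $u$ the derivative $D_uF(x)=F(x)+F(x+u)$ is a $2$-to-$1$ mapping from $\gf_{2^n}$ onto the image set $D_u$ (so in particular $\#D_u = 2^{n-1}$).

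First I would expand the additive autocorrelation as
\begin{equation*}
\Delta_F(u,v) \;=\; \sum_{x\in\gf_{2^n}}(-1)^{\tr_{2^n}(v\,D_uF(x))}
\;=\; 2\sum_{y\in D_u}(-1)^{\tr_{2^n}(vy)},
\end{equation*}
where the second equality uses the APN assumption that each $y\in D_u$ has exactly two preimages under $D_uF$.

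Next I would convert the sum over $D_u$ into a Walsh transform of $\bar{f}_u$. Using $(-1)^{\bar{f}_u(y)}=1-2\bar{f}_u(y)$ together with the orthogonality relation $\sum_{y\in\gf_{2^n}}(-1)^{\tr_{2^n}(vy)}=0$ for any nonzero $v$, I obtain
\begin{equation*}
W_{\bar{f}_u}(v) \;=\; \sum_{y\in\gf_{2^n}}\bigl(1-2\bar{f}_u(y)\bigr)(-1)^{\tr_{2^n}(vy)}
\;=\; -2\sum_{y\in D_u}(-1)^{\tr_{2^n}(vy)}.
\end{equation*}
Hence $\sum_{y\in D_u}(-1)^{\tr_{2^n}(vy)} = -\tfrac{1}{2} W_{\bar{f}_u}(v)$, and substituting into the expression for $\Delta_F(u,v)$ gives $\Delta_F(u,v)=-W_{\bar{f}_u}(v)$. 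Applying Proposition \ref{DLCT_Delta} yields the claimed identity $\mathtt{DLCT}_F(u,v)=-\tfrac12 W_{\bar{f}_u}(v)$.

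There is no real obstacle here: the argument is a short two-step rewriting. The only subtle point to flag is that the orthogonality relation requires $v\neq 0$, which matches the context in which the DLCT is of interest (the entries at $v=0$ being the trivial value $2^{n-1}$, handled by the conventions already set up in Section \ref{Preliminaries}).
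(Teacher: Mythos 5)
Your proof is correct and follows essentially the same route as the paper's: both reduce $\Delta_F(u,v)$ to $2\sum_{y\in D_u}(-1)^{\tr_{2^n}(vy)}$ via the $2$-to-$1$ property of APN derivatives, and then identify this sum with $-W_{\bar f_u}(v)$ using the character orthogonality $\sum_{y\in\gf_{2^n}}(-1)^{\tr_{2^n}(vy)}=0$ for $v\neq 0$ (the paper does this by splitting the sum over $D_u$ and its complement, you via $(-1)^{\bar f_u(y)}=1-2\bar f_u(y)$; these are the same computation). Your remark that the identity requires $v\neq 0$ is a valid observation that the paper leaves implicit.
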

\begin{proof} Since the APN function $F(x)$ leads to a $2$-to-$1$ derivative function $D_uF(x)$ at any nonzero $u$,  we known that $D_u$ has $2^{n-1}$ elements. Then,
	\begin{eqnarray*}
		\Delta_F(u,v) &=& \sum_{x\in\gf_{2^n}}(-1)^{\tr_{2^n}(v(F(x+u)+F(x)))} \\
		&=& 2\sum_{y\in D_u} (-1)^{\tr_{2^n}(vy)}\\
		&=& \sum_{y\in D_u}(-1)^{\tr_{2^n}(vy)} - \sum_{y\in \gf_{2^n} \backslash D_u}(-1)^{\tr_{2^n}(vy)}\\
		&=& -\sum_{y\in\gf_{2^n}}(-1)^{\bar{f}_u(y)+\tr_{2^n}(vy)}\\
		&=& -W_{\bar{f}_u}(v).
	\end{eqnarray*}
	The desired conclusion immediately follows from Eq. (\ref{DLCT_W}).
\end{proof}
\mkq{
%
%
%
%
}

\begin{Rem}
	\emph{From Propositions  \ref{Prop_AB} and \ref{Prop_APN}, we see that the values of DLCT of APN and AB functions are given by the Walsh transform of two families of Boolean functions $\widetilde{f}_u$ in \eqref{gv} and $\bar{f}_v$ in (\ref{fu}).  It is easily seen that for APN and AB functions, both these two families of functions are balanced. 
		It is well known that a  Boolean function is balanced if and only if its Walsh transform vanishes at the zero element. Therefore, for any balanced Boolean function $f$, together with the Parseval's relation, the Walsh transform of $f$ satisfies $$|W_f(\omega)|\ge 2^n\sqrt{\frac{1}{2^n-1}}.$$ Thus, the DLU of any APN and AB functions $F$ over $\gf_2^n$ is lower bounded by  $2^{n-1}\sqrt{\frac{1}{2^n-1}}$, which is consistent with the bound in Corollary \ref{bound_F_2n}. }
\end{Rem}

\section{DLCT spectra and DLU of some special polynomials}
\label{polynomials}

In this section, we mainly consider some special polynomials with low differential uniformity. Explicitly, we consider the DLCT spectra and DLU of the inverse, Gold and Bracken-Leander functions. In addition, the DLU of all quadratic polynomials can be determined.  We divide our results into three subsections according to their algebraic degrees.

\subsection{The inverse function}
\begin{Th}
	Let $F(x) = \frac{1}{x}\in\gf_{2^n}[x]$. Then  $$ \Gamma_F =  \left\{ \frac{ K\left(v \right) - 1}{2} + (-1)^{\tr_{2^n}(v)}: v\in \gf_{2^n}^{*} \right\}.$$ Moreover, for any $\gamma\in\Gamma_F$, $\gamma\equiv0\pmod4$ when $n\ge3$. Particularly, when $n=2k$ is even,  $\mathtt{DLU}_F = 2^{k}$.
\end{Th}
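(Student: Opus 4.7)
The plan is to reduce the DLCT spectrum to a single row via Proposition \ref{prop_monomials}, express that row in terms of the Kloosterman sum $K(v)$, and then extract the divisibility and the exact DLU by combining Lemmas \ref{KK} and \ref{K_range}.

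Since $F(x) = x^{2^n-2}$ and $\gcd(2^n-2, 2^n-1) = 1$, Proposition \ref{prop_monomials} reduces the problem to evaluating $\tfrac{1}{2}\Delta_F(1, v)$ for $v \in \gf_{2^n}^*$. With the convention $F(0) = 0$, the summands at $x = 0$ and $x = 1$ each equal $(-1)^{\tr_{2^n}(v)}$, and for $x \in \gf_{2^n}\setminus\{0,1\}$ the identity $\tfrac{1}{x} + \tfrac{1}{x+1} = \tfrac{1}{x^2+x}$ applies. The map $x \mapsto x^2 + x$ is a two-to-one surjection from $\gf_{2^n}\setminus\{0,1\}$ onto $\{y \in \gf_{2^n}^* : \tr_{2^n}(y) = 0\}$, so the tail contribution equals $2\sum_{y \in \gf_{2^n}^*,\, \tr_{2^n}(y) = 0} (-1)^{\tr_{2^n}(v/y)}$. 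Substituting $z = 1/y$ and writing the condition $\tr_{2^n}(1/z) = 0$ via the indicator $(1 + (-1)^{\tr_{2^n}(1/z)})/2$ splits this into $\sum_{z \in \gf_{2^n}^*} (-1)^{\tr_{2^n}(vz)} + \sum_{z \in \gf_{2^n}^*} (-1)^{\tr_{2^n}(vz + 1/z)} = -1 + K(v)$. Combining yields $\Delta_F(1, v) = K(v) - 1 + 2(-1)^{\tr_{2^n}(v)}$, and halving produces the claimed description of $\Gamma_F$.

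For the divisibility claim, I split on $\tr_{2^n}(v)$ and invoke Lemma \ref{KK}. When $\tr_{2^n}(v) = 0$, one has $K(v) \equiv -1 \pmod 8$, so $(K(v) - 1)/2 \equiv -1 \pmod 4$; adding $(-1)^0 = 1$ gives $\gamma \equiv 0 \pmod 4$. When $\tr_{2^n}(v) = 1$, one has $K(v) \equiv 3 \pmod 8$, so $(K(v) - 1)/2 \equiv 1 \pmod 4$; adding $(-1)^1 = -1$ again gives $\gamma \equiv 0 \pmod 4$.

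For the DLU when $n = 2k$, the Weil bound $|K(v)| \leq 2^{k+1}$ forces $|\gamma| \leq 1 + (2^{k+1}+1)/2$, and since $\gamma$ is an integer this yields $|\gamma| \leq 2^k + 1$; combined with $\gamma \equiv 0 \pmod 4$ (valid for $k \geq 2$), this sharpens to $|\gamma| \leq 2^k$. To see that the bound is attained, Lemma \ref{K_range} supplies $v$ with $K(v) = 2^{k+1} - 1$ (this integer is $\equiv -1 \pmod 4$ and lies in the stated range), and since $2^{k+1} \equiv 0 \pmod 8$ for $k \geq 2$ we get $K(v) \equiv -1 \pmod 8$; Lemma \ref{KK} then forces $\tr_{2^n}(v) = 0$, so $\gamma = (K(v) + 1)/2 = 2^k$. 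The principal technical point is the interplay of the mod-$8$ constraint in Lemma \ref{KK} with the density statement in Lemma \ref{K_range}, which together pin down both the upper bound and its attainability; the small case $n = 2$ can be verified by direct computation.
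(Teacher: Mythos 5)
Your proof is correct and follows essentially the same route as the paper's: reduce to the single row $\Delta_F(1,v)$ via Proposition \ref{prop_monomials}, identify it with $K(v)-1+2(-1)^{\tr_{2^n}(v)}$, and then apply Lemmas \ref{KK} and \ref{K_range} for the divisibility and the value of $\mathtt{DLU}_F$. The only difference is cosmetic --- you evaluate the tail sum directly with an indicator-function expansion where the paper counts the two sets $S_0,S_1$ by the same underlying character-sum computation --- and you are in fact slightly more careful than the paper in making the Weil bound explicit and in flagging the trivial case $n=2$.
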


\begin{proof} By Proposition \ref{prop_monomials} it suffices to compute $\Delta_F(1,v)$. According to the definition,
	\begin{eqnarray*}
	   \Delta_F(1,v) & = & \sum_{x\in\gf_{2^n}}(-1)^{\tr_{2^n}(v(F(x)+F(x+1)))} \\
	   & = & \sum_{x\in\gf_{2^n}\backslash\{0,1\}} (-1)^{\tr_{2^n}\left(\frac{v}{x^2+x}\right)} + 2 \cdot (-1)^{\tr_{2^n}(v)}. 
	\end{eqnarray*}
Let sets $S_0$ and $S_1$ be $$S_0 := \left\{ \alpha\in\gf_{2^n}^{*} | \tr_{2^n}(\alpha) = 0 ~\text{and}~ \tr_{2^n}\left(\frac{v}{\alpha}\right) = 0    \right\},$$
and $$S_1 := \left\{ \alpha\in\gf_{2^n}^{*} | \tr_{2^n}(\alpha) = 0 ~\text{and}~ \tr_{2^n}\left(\frac{v}{\alpha}\right) = 1    \right\}.$$
It is clear that $\tr_{2^n}\left( \frac{v}{x^2+x} \right) = 0$ if and only if $ x^2+x \in S_0 $  and $\tr_{2^n} \left( \frac{v}{x^2+x} \right) = 1$ if and only if $x^2+x \in S_1$. Moreover, given $\alpha\in S_0$ or $S_1$, there exist two $x=x_0, x_0+1\in\gf_{2^n}$ satisfy $x^2+x=\alpha$.
Hence, 
$$\Delta_F(1,v) = 2\# S_0 - 2\# S_1 + 2 \cdot (-1)^{\tr_{2^n}(v)}. $$
In the following, we compute $\# S_0$ and $\# S_1$. We have 
\begin{eqnarray*}
4\cdot \# S_0 & = & \sum_{\alpha\in\gf_{2^n}^{*}}\sum_{a,b\in\gf_2}(-1)^{a\tr_{2^n}(\alpha)+b\tr_{2^n}\left(\frac{v}{\alpha}\right)}\\
&=& 2^n -1 + \sum_{\alpha\in\gf_{2^n}^{*}}(-1)^{\tr_{2^n}(\alpha)} + \sum_{\alpha\in\gf_{2^n}^{*}}(-1)^{\tr_{2^n}\left(\frac{v}{\alpha}\right)} + \sum_{\alpha\in\gf_{2^n}^{*}}(-1)^{\tr_{2^n}\left( \alpha+\frac{v}{\alpha} \right)}\\
&=& 2^n+ K\left(v \right) - 3,
\end{eqnarray*}
and thus $\# S_0 =  2^{n-2} + \frac{K\left(v \right) -3}{4}$. Similarly, $\# S_1 = 2^{n-2} - \frac{K\left(v \right) -3}{4}-1$ .

Therefore, 
$$\Delta_F(1,v) = K\left(v \right) - 1 + 2 \cdot (-1)^{\tr_{2^n}(v)}.$$
Furthermore, for any $\gamma\in\Gamma_F$,  $\gamma\equiv0\pmod4$ can be obtained directly from Lemma \ref{KK}. 

Particularly, when $n=2k,$ from Lemma \ref{K_range}, the maximum  of $K(v)$ is $2^{k+1}-1$ and in the case, $\tr_{2^n}(v)=0$ according to Lemma \ref{KK}. Thus the maximum of $\Delta_{F}(1,v)$ is $2^{k}$. Similarly, the minimum of $\Delta_{F}(1,v)$ is $-2^{k}$. Hence $\mathtt{DLU}_F = 2^{k}$ in this case.
\end{proof}

\subsection{Quadratic Functions}

\begin{Th}
	\label{quad_func}
	Let $F(x) = \sum_{0\le i<j\le n-1}a_{ij}x^{2^i+2^j}\in\gf_{2^n}[x]$. Then $$ \Gamma_F \in \left\{ -2^{n-1}, 0, 2^{n-1} \right\} ~~\text{and}~~ \mathtt{DLU}_F = 2^{n-1}.  $$
\end{Th}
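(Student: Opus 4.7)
\medskip
\noindent\textbf{Proof plan for Theorem \ref{quad_func}.}

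The plan is to exploit the fact that the derivative of a quadratic function is affine, which collapses every DLCT entry to a character sum of a linear functional, yielding at most three possible values. First I would compute
\[
D_uF(x)=F(x)+F(x+u)=L_u(x)+F(u),
\]
where $L_u(x)=\sum_{i<j}a_{ij}\bigl(x^{2^i}u^{2^j}+x^{2^j}u^{2^i}\bigr)$ is $\gf_2$-linear in $x$ (it is the symmetric bilinear form associated with $F$ evaluated at $(u,x)$). The key observation I would record at this stage is that $L_u(u)=D_uF(u)+F(u)=F(0)+F(u)+F(u)=0$, i.e.\ $u\in\ker L_u$ for every nonzero $u$.

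Next, using part (1) of Proposition \ref{DLCT_WT}, I would rewrite
\[
\mathtt{DLCT}_F(u,v)=\tfrac12\,W_{D_uF}(0,v)=\tfrac{(-1)^{\tr_{2^n}(vF(u))}}{2}\sum_{x\in\gf_{2^n}}(-1)^{\tr_{2^n}(vL_u(x))}.
\]
Since $x\mapsto \tr_{2^n}(vL_u(x))$ is a $\gf_2$-linear form on $\gf_{2^n}$, the inner sum equals $2^n$ if this form vanishes identically and equals $0$ otherwise. Hence
\[
\mathtt{DLCT}_F(u,v)\in\{-2^{n-1},\,0,\,2^{n-1}\},
\]
which establishes the claim on $\Gamma_F$ and already gives $\mathtt{DLU}_F\le 2^{n-1}$.

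For the lower bound $\mathtt{DLU}_F\ge 2^{n-1}$, I would fix an arbitrary nonzero $u\in\gf_{2^n}$ and exhibit a nonzero $v$ that witnesses the extremal value. The condition ``$\tr_{2^n}(vL_u(x))=0$ for all $x$'' is equivalent, via the non-degeneracy of the trace form, to $v\in\operatorname{Im}(L_u)^{\perp}$. Because $u\in\ker L_u$ is nonzero, $\dim\operatorname{Im}(L_u)\le n-1$, so $\dim\operatorname{Im}(L_u)^{\perp}\ge 1$ and a nonzero $v$ in this orthogonal complement exists. For such $(u,v)$ the character sum above equals $2^n$ and $|\mathtt{DLCT}_F(u,v)|=2^{n-1}$, completing the proof.

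There is essentially no real obstacle in this argument: the whole proof rests on two elementary facts, namely that quadratic polynomials over $\gf_{2^n}$ have affine derivatives and that the kernel of $L_u$ always contains $u$. The only point deserving care is the verification $L_u(u)=0$ (which forces $\ker L_u$ to be nontrivial and thus guarantees a witness $v$); this ensures that the bound $2^{n-1}$ is actually attained rather than merely being an upper bound, and the result holds uniformly for all choices of coefficients $a_{ij}$.
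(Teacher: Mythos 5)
Your proof is correct, and the computation of the spectrum is essentially the paper's: both of you expand $F(x)+F(x+u)$ to see that the derivative is affine, pull out the constant $(-1)^{\tr_{2^n}(vF(u))}$, and reduce the entry to a character sum of an $\gf_2$-linear form, which is $2^n$ or $0$; this gives $\Gamma_F\subseteq\{-2^{n-1},0,2^{n-1}\}$ in both arguments. Where you genuinely diverge is the attainability of $2^{n-1}$. The paper invokes its earlier Remark on bent functions: $\Gamma_F=\{0\}$ would force $F$ to be bent, which is impossible for an $(n,n)$-function since bentness requires $m\le n/2$; hence some entry is nonzero and must be $\pm 2^{n-1}$. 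You instead argue directly that $L_u(u)=0$, so $\ker L_u$ is nontrivial, $\dim\operatorname{Im}(L_u)\le n-1$, and the non-degeneracy of the trace form supplies a nonzero $v\in\operatorname{Im}(L_u)^{\perp}$ witnessing $|\mathtt{DLCT}_F(u,v)|=2^{n-1}$. Your route is more self-contained (it does not rely on the bound $m\le n/2$ for bent functions) and proves the slightly stronger fact that \emph{every} nonzero input difference $u$ admits an extremal mask $v$, i.e.\ every row of the DLCT attains $\pm 2^{n-1}$; the paper's route is shorter and reuses machinery it has already set up. Both are valid.
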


\begin{proof}
	For any  $u,  v\in\gf_{2^n}^{*},$
	\begin{eqnarray*}
		\Delta_F(u,v) &=& \sum_{x\in\gf_{2^n}}(-1)^{\tr_{2^n}\left( v(F(x)+F(x+u))\right)  }\\
		&=&\sum_{x\in\gf_{2^n}}(-1)^{\tr_{2^n}\left( v\left( \sum_{0\le i<j\le n-1}  a_{ij}\left(  u^{2^j}x^{2^i} +  u^{2^i} x^{2^j}+u^{2^i+2^j}  \right) \right)  \right)}\\
		&=&(-1)^{\tr_{2^n}\left( v \left(  \sum_{0\le i<j\le n-1}  a_{ij} u^{2^i+2^j} \right)  \right)} \sum_{x\in\gf_{2^n}} (-1)^{\tr_{2^n}  \left(   L(u,v) x\right)}, \\
	\end{eqnarray*}
where $L(u,v) =  \sum_{0\le i<j\le n-1}\left( a_{ij}^{2^{-i}}u^{2^{j-i}}v^{2^{-i}} + a_{ij}^{2^{-j}} u^{2^{i-j}} v^{2^{-j}} \right).$	When $ L(u,v) = 0,$ $\Delta_F(u,v) = \pm 2^n$; otherwise, $\Delta_F(u,v) = 0.$ Thus $ \Gamma_F \in \left\{ -2^{n-1}, 0, 2^{n-1} \right\}. $ Moreover, since $F$ is not bent, from Remark \ref{Bent} we obtain $\mathtt{DLU}_F\neq0$ and then $\mathtt{DLU}_F = 2^{n-1}$.
\end{proof}

\begin{Cor}
	Let $F(x) = x^{2^i+1}\in\gf_{2^n}[x]$. Assume $d=\gcd(i,n)$ and $n=d\cdot n^{'}$. Then
		\begin{equation*}
	\Gamma_F =	\left\{
	\begin{array}{lr}
	\{ 0, 2^{n-1} \},  &~ \text{if}~ n^{'}~ \text{is even}, \\
	\{ -2^{n-1}, 0 \}, &~ \text{if}~ n^{'} ~\text{is odd and}~ d=1,\\
	\{ -2^{n-1}, 0, 2^{n-1} \}, &~ \text{otherwise.}
	\end{array}
	\right.
	\end{equation*}   
\end{Cor}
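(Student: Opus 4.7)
The plan is to specialize Theorem \ref{quad_func} to $F(x) = x^{2^i+1}$. Only the single coefficient $a_{0,i} = 1$ is nonzero, so the linear form appearing in the proof of Theorem \ref{quad_func} collapses to
\[
L(u,v) = u^{2^i}v + (uv)^{2^{-i}},
\]
and the prefactor there becomes $(-1)^{\tr_{2^n}(v u^{2^i+1})}$. Consequently, $\mathtt{DLCT}_F(u,v)$ equals $(-1)^{\tr_{2^n}(v u^{2^i+1})} 2^{n-1}$ when $L(u,v) = 0$, and equals $0$ otherwise. Thus the task reduces to (i) identifying for which nonzero $(u,v)$ one has $L(u,v) = 0$, and (ii) determining which values the sign $(-1)^{\tr_{2^n}(v u^{2^i+1})}$ can take on that set.

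For (i), raising $L(u,v)$ to the $2^i$-th power yields $u^{2^{2i}}v^{2^i} + uv$, so $L(u,v) = 0$ is equivalent to $(vu^{2^i+1})^{2^i-1} = 1$, where we used the factorization $2^{2i}-1 = (2^i+1)(2^i-1)$. Since $\gf_{2^n}^* \cap \gf_{2^i}^* = \gf_{2^d}^*$, this condition is in turn equivalent to $vu^{2^i+1} \in \gf_{2^d}^*$. In particular, for each nonzero $u$ there are exactly $2^d-1$ nonzero $v$'s satisfying $L(u,v) = 0$; since $F$ is a genuine quadratic we have $1\le i\le n-1$ and hence $n' \geq 2$, so $2^d - 1 < 2^n - 1$ and the value $0$ always belongs to $\Gamma_F$.

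For (ii), set $w := v u^{2^i+1} \in \gf_{2^d}^*$. The tower formula for the trace gives $\tr_{2^n}(w) = \tr_{2^d}(\tr_{2^n/2^d}(w)) = \tr_{2^d}(n' \cdot w)$, valid because $w$ lies in the subfield $\gf_{2^d}$. The analysis then splits on the parity of $n'$: if $n'$ is even, $n'\cdot w = 0$ in $\gf_{2^d}$, so $\tr_{2^n}(w) = 0$ for every $w \in \gf_{2^d}^*$ and only $+2^{n-1}$ occurs, giving $\Gamma_F = \{0, 2^{n-1}\}$; if $n'$ is odd then $\tr_{2^n}(w) = \tr_{2^d}(w)$, and when $d=1$ the only nonzero $w$ is $1$ with $\tr_2(1) = 1$, producing only $-2^{n-1}$, so $\Gamma_F = \{-2^{n-1},0\}$; when $n'$ is odd and $d\geq 2$, the surjective linear form $\tr_{2^d}\colon \gf_{2^d}\to\gf_2$ attains both values on $\gf_{2^d}^*$, producing both $\pm 2^{n-1}$ and hence $\Gamma_F = \{-2^{n-1}, 0, 2^{n-1}\}$.

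The only nontrivial step is the structural identification $L(u,v) = 0 \iff v u^{2^i+1} \in \gf_{2^d}^*$; once that subfield description is in hand, everything else is either a direct invocation of Theorem \ref{quad_func} or a routine subfield trace computation, and the three cases of the corollary fall out immediately.
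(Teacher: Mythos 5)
Your proposal is correct and follows essentially the same route as the paper: specialize the computation from Theorem \ref{quad_func}, identify the kernel of the resulting linearized form with (a twist of) the subfield $\gf_{2^d}$, and decide the sign via $\tr_{2^n}(w)=\tr_{2^d}(n'w)$ for $w\in\gf_{2^d}$. The only cosmetic difference is that the paper first reduces to $u=1$ via Proposition \ref{prop_monomials}, so its kernel condition reads $v\in\gf_{2^d}$, whereas you keep general $u$ and derive the equivalent condition $vu^{2^i+1}\in\gf_{2^d}^{*}$ directly.
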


\begin{proof}
	From the proof of Theorem \ref{quad_func}, it is clear that 
	$$\Delta_{F}(1,v) = (-1)^{\tr_{2^n}(v)} \sum_{x\in\gf_{2^n}} (-1)^{\tr_{2^n}(L(v)x)}, $$
	where $L(v) = v^{2^{-i}}+v.$ Thus $\ker L = \gf_{2^{\gcd(i,n)}} = \gf_{2^d}$. Furthermore, for any $v\in\gf_{2^d}$, $\tr_{2^n}(v)=n^{'}\tr_{2^d}(v)$. Therefore, 
	\begin{equation*}
 \Delta_F(1,v) =	\left\{
	\begin{array}{lr}
	0,  &~  \text{if}~ v \in\gf_{2}^n\backslash\gf_{2}^d, \\
	2^n\cdot (-1)^{n^{'}\tr_{2^d}(v)}, &~ \text{if}~ v\in\gf_{2}^d.
	\end{array}
	\right.
	\end{equation*}
   Equivalently,
	\begin{equation*}
\Gamma_F =	\left\{
\begin{array}{lr}
\{ 0, 2^{n-1} \},  &~ \text{if}~ n^{'}~ \text{is even}, \\
\{ -2^{n-1}, 0 \}, &~ \text{if}~ n^{'} ~\text{is odd and}~ d=1,\\
\{ -2^{n-1}, 0, 2^{n-1} \}, &~ \text{otherwise.}
\end{array}
\right.
\end{equation*}   
\end{proof}

\subsection{Cubic Functions}

\begin{Th}
	Let $F(x) = x^{q^2+q+1}\in\gf_{q^4}[x]$, where $q=2^k$. Then
	$$\Gamma_F \in \left\{  -q^3, 0, q^3  \right\} ~~\text{and}~~ \mathtt{DLU}_F = q^3. $$
\end{Th}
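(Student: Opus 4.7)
The plan is to exploit that $F(x) = x^{q^2+q+1}$ has algebraic degree $3$, so for every fixed $u$ and $v$ the Boolean function $g_{u,v}(x) := \tr_{q^4}(v(F(x)+F(x+u)))$ is quadratic on $\gf_{q^4}\cong\gf_2^{4k}$, and $\Delta_F(u,v)=\sum_x(-1)^{g_{u,v}(x)}$ is the value at $0$ of its Walsh transform. The classical theory of Walsh transforms of quadratic Boolean forms confines this value to $\{0,\pm 2^{(n+d)/2}\}$, where $n=4k$ and $d$ is the $\gf_2$-dimension of the radical of the associated polarization. Thus determining $\Gamma_F$ and $\mathtt{DLU}_F$ reduces to controlling the rank of a concrete linearized polynomial attached to $g_v$.

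First, by Proposition~\ref{prop_monomials} it suffices to analyse $\Delta_F(1,v)$ for $v\in\gf_{q^4}^*$. I would expand $(x+1)^{q^2+q+1}$ to obtain
\[
F(x)+F(x+1)=(x^{q+1}+x^{q^2+q}+x^{q^2+1})+(x+x^q+x^{q^2})+1,
\]
so $g_v = Q_v+\ell_v+c_v$ with quadratic part $Q_v(x)=\tr_{q^4}(v(x^{q+1}+x^{q^2+q}+x^{q^2+1}))$ and linear part $\ell_v(x)=\tr_{q^4}((v+v^{q^2}+v^{q^3})x)$. Computing the symmetric polarization $B_v(x,y)=g_v(x+y)+g_v(x)+g_v(y)+g_v(0)$ and repeatedly using the Frobenius invariance $\tr(a)=\tr(a^{q^j})$ to collect every cross term $\tr(\cdot\,x^{q^i}y^{q^j})$ as $\tr(y\cdot\star)$, I would express the polarization as $B_v(x,y)=\tr_{q^4}(y\cdot L_v(x))$ with the $\gf_q$-linearized polynomial
\[
L_v(x)=(v+v^{q^3})x^q+(v+v^{q^2})x^{q^2}+(v^{q^2}+v^{q^3})x^{q^3}.
\]
Its three coefficients sum to $0$, so $\gf_q\subseteq\ker L_v$ for every $v$; since $L_v$ is $\gf_q$-linear, $\ker L_v$ is always a $\gf_q$-subspace of $\gf_{q^4}$.

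I would then conclude by a case analysis on the smallest subfield of $\gf_{q^4}$ containing $v$. For $v\in\gf_q^*$, a transitivity identity $\tr_{q^4/q}(x^{q+1}+x^{q^2+q}+x^{q^2+1})\equiv 0$ (each summand is either self-conjugate or swaps with another under Frobenius) forces $Q_v\equiv 0$, while $\ell_v(x)=\tr_{q^4}(vx)\not\equiv 0$, giving $\Delta_F(1,v)=0$. For $v\in\gf_{q^2}\setminus\gf_q$, the polynomial $L_v$ collapses to $(v+v^q)(x^q+x^{q^3})$, whose kernel is exactly $\gf_{q^2}$, so $d=2k$; a short check on $\gf_{q^2}$ verifies the compatibility $(\ell_v+Q_v)|_{\gf_{q^2}}\equiv 0$ needed by the Walsh formula, producing the extremal value of $|\Delta_F(1,v)|$. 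For $v\in\gf_{q^4}\setminus\gf_{q^2}$, I would argue that $L_v$ cannot have rank $1$: any $\gf_q$-linear map of rank $1$ on $\gf_{q^4}$ has the form $x\mapsto\alpha\tr_{q^4/q}(\beta x)$, which necessarily involves a nonzero $x^{q^0}$-coefficient and so cannot match the shape of $L_v$. Combining the three cases will deliver $\Gamma_F\subseteq\{-q^3,0,q^3\}$ together with the equality $\mathtt{DLU}_F=q^3$.

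The hardest step will be the rank control when $v\in\gf_{q^4}\setminus\gf_{q^2}$. One must weave together three constraints on $L_v(x)=Ax^q+Bx^{q^2}+Cx^{q^3}$: (i) the relation $A+B+C=0$ among the coefficients, which guarantees $\gf_q\subseteq\ker L_v$; (ii) the $\gf_q$-linearity, which forces $\dim_{\gf_2}\ker L_v$ to be a multiple of $k$; and (iii) the absence of the $x^{q^0}$ term, which rules out rank $1$. These together must pin down the admissible kernel dimensions and, for each such dimension, verify the affine compatibility condition so that the quadratic Walsh formula gives exactly the claimed extremal value $|\Delta_F(1,v)|$ and no larger.
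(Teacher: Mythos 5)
Your overall strategy---reduce to $\Delta_F(1,v)$ via Proposition~3.6, identify the radical of the quadratic Boolean function $g_v$ with $\ker L_v$ for the linearized polynomial $L_v(x)=(v+v^{q^3})x^q+(v+v^{q^2})x^{q^2}+(v^{q^2}+v^{q^3})x^{q^3}$, and read off the Walsh value $0$ or $\pm 2^{(n+d)/2}$---is sound and runs parallel to the paper, which instead computes $\Delta_F(1,v)^2=q^4\sum_{x\in\ker L_v}(-1)^{\phi_v(x)}$ and determines $\#\ker L_v$ from the rank of the Dickson matrix of $L_v$. Your cases $v\in\gf_q^{*}$ and $v\in\gf_{q^2}\setminus\gf_q$ are correct, and your explicit verification that the extremal value is attained on $\gf_{q^2}\setminus\gf_q$ is more informative than the paper's appeal to non-bentness. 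However, there is a genuine gap in the case $v\in\gf_{q^4}\setminus\gf_{q^2}$, exactly at the step you flag as hardest. Your three constraints only pin down $\mathrm{rank}_{\gf_q}(L_v)\in\{2,3\}$: the relation $A+B+C=0$ gives $\gf_q\subseteq\ker L_v$, hence rank at most $3$; the absence of an $x^{q^0}$ term rules out rank $1$; but nothing you state excludes rank $3$, i.e.\ $\ker L_v=\gf_q$ exactly. If rank $3$ occurred for some $v$ with $g_v$ constant on the radical, the spectrum would contain $\pm 2^{5k/2}$ (an admissible integer whenever $k$ is even), destroying the three-valued conclusion. For odd $k$ you could add the parity constraint that the symplectic rank $n-d$ is even, which forces $d=2k$; but you do not invoke it, and for even $k$ it does not help.

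To close the gap one must actually prove $\dim_{\gf_q}\ker L_v=2$ for every $v\notin\gf_q$; the paper does this by asserting that the $4\times 4$ Dickson matrix of $L_v$ has rank $2$. A self-contained route inside your framework: with $A=v+v^{q^3}$ one checks $L_v(x)=A\,y^{q}+A^{q^3}y^{q^2}$ where $y=x+x^q$, so $\ker L_v$ is the preimage under $x\mapsto x+x^q$ of the solution set of $Ay^q+A^{q^3}y^{q^2}=0$. For $v\notin\gf_q$ one has $A\neq 0$, the nonzero solutions satisfy $y^{q^2-q}=A^{1-q^3}$, and $y_0=A^{q^2}$ is one of them, so the solution set is the line $\gf_q A^{q^2}$; since $\tr_{q^4/q}(A^{q^2})=\tr_{q^4/q}(v+v^{q^3})=0$, this line lies in the image of $x\mapsto x+x^q$, whence $\#\ker L_v=q\cdot q=q^2$. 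With this inserted your argument goes through. (A separate minor point, inherited from the theorem statement itself: the computation yields $|\Delta_F(1,v)|\in\{0,q^3\}$, and since $\mathtt{DLCT}_F=\frac{1}{2}\Delta_F$, the DLCT entries actually lie in $\{0,\pm q^3/2\}$.)
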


\begin{proof}
	 For any $v\in\gf_{q^4}^{*}$,
	 \begin{eqnarray*}
	 \Delta_F(1,v) &=& \sum_{x\in\gf_{q^4}}(-1)^{\tr_{q^4} \left( v(F(x)+F(x+1) ) \right) } \\
	 &=& \sum_{x\in\gf_{q^4}}(-1)^{\tr_{q^4} \left( v\left( x^{q^2+q}+x^{q^2+1}+x^{q+1}+x^{q^2}+x^{q}+x+1 \right)  \right) }\\
	 &=& (-1)^{\tr_{q^4}(v)} \sum_{x\in\gf_{q^4}}(-1)^{\tr_{q^4}\left( vx^{q^2+1}+\left( v^{q^3}+v \right)x^{q+1} +\left( v^{q^3}+v^{q^2}+v  \right) x   \right) }
	 \end{eqnarray*}
    Moreover, 
    \begin{eqnarray*}
   && \Delta_F(1,v)^2 \\
     &=&  \sum_{x,y\in\gf_{q^4}}(-1)^{\tr_{q^4}\left( vx^{q^2+1}+\left( v^{q^3}+v \right)x^{q+1} +\left( v^{q^3}+v^{q^2}+v  \right) x +vy^{q^2+1}+\left( v^{q^3}+v \right)y^{q+1} +\left( v^{q^3}+v^{q^2}+v  \right) y   \right) } \\
    &= &  \sum_{x,y\in\gf_{q^4}}(-1)^{\tr_{q^4}\left( v(x+y)^{q^2+1}+\left( v^{q^3}+v \right)(x+y)^{q+1} +\left( v^{q^3}+v^{q^2}+v  \right) (x+y) +vy^{q^2+1}+\left( v^{q^3}+v \right)y^{q+1} +\left( v^{q^3}+v^{q^2}+v  \right) y   \right) } \\
    &=& \sum_{x,y\in\gf_{q^4}}(-1)^{\tr_{q^4}\left( v\left( x^{q^2+1}+xy^{q^2}+x^{q^2}y \right) + \left( v^{q^3}+v \right) \left(x^{q+1}+xy^{q}+x^qy  \right)  + \left( v^{q^3}+v^{q^2}+v \right) x \right) } \\
    &=&\sum_{x\in\gf_{q^4}}(-1)^{  \tr_{q^4} \left(  vx^{q^2+1} +\left(v^{q^3}+v \right)x^{q+1}+\left( v^{q^3}+v^{q^2}+v \right) x  \right)   }\sum_{y\in\gf_{q^4}}(-1)^{\tr_{q^4}(L_v(x)y)},
    \end{eqnarray*}	 
where $L_v(x) = \left( v^{q^3}+v^{q^2}\right)x^{q^3}+\left( v^{q^2}+v \right)x^{q^2}+\left( v^{q^3}+v\right)x^q.  $ Let 
$\ker\left(L_v \right) := \left\{  x\in\gf_{q^4} | L_v(x) =0 \right\}.$ Then 
$$\Delta_F(1,v)^2  = q^4\cdot\sum_{x\in\ker\left(L_v \right)} (-1)^{ \phi_v(x) },  $$ where $\phi_v(x) = \tr_{q^4} \left(  vx^{q^2+1} +\left(v^{q^3}+v \right)x^{q+1}+\left( v^{q^3}+v^{q^2}+v \right) x  \right). $  

(1) When $v\in\gf_q^{*}$, $L_v(x)=0$ and thus $\ker\left(L_v \right) = \gf_{q^4}$. Moreover, $ \phi_v(x) = \tr_{q^4}\left(  vx^{q^2+1} +vx \right) = \tr_{q^4}\left(vx\right). $ Therefore, $$\Delta_F(1,v)^2  = q^4\cdot\sum_{x\in \gf_{q^4}} (-1)^{ \tr_{q^4}(vx) } = 0.$$

(2) When $v\in\gf_{q^4}\backslash\gf_{q}$, $\phi_v(x)$ is linear on $\ker\left(L_v \right)$, which can be proved by direct computations. Thus $\Delta_F(1,v)^2\neq 0$ only when $\phi_v(x)$ is zero mapping on $\ker\left(L_v \right)$. In addition, according to Remark \ref{Bent}, there must exist some $v$ such that $\Delta_F(1,v)\neq0$ since $F(x)$ is not bent. Moreover, the Dickson Matrix of $L_v$ is 
$$D = \begin{pmatrix}
0 & v^{q^3}+v &  v^{q^2}+v & v^{q^3}+v^{q^2}\\
v^{q^3}+v & 0 & v^q+v & v^{q^3}+v^q \\
v^{q^2}+v & v^q+v & 0 & v^{q^2}+v^q \\
v^{q^3}+v^{q^2} & v^{q^3}+v^q & v^{q^2}+v^q & 0 
\end{pmatrix}$$
It is easy to compute that the Rank of $D$ is $2$ and thus $\#\ker\left(L_v\right)=q^2$. Therefore, there exists some $v$ with $$\Delta_F(1,v)^2 = q^4\cdot\sum_{x\in\ker\left(L_v \right)} (-1)^{ \phi_v(x)} =q^4\cdot \#\ker\left(L_v\right) = q^6. $$

Thus $\Gamma_F \in \left\{  -q^3, 0, q^3  \right\}$ and $\mathtt{DLU}_F = q^3.$
\end{proof}

\section{Conclusion}
\label{conclusion}
{This paper intensively investigate the differential-linear connectivity table (DLCT) of vectorial Boolean functions.
The main contributions of this paper are fourfold.
Firstly we establish the connection of DLCT with the (generalized) additive autocorrelation of vectorial Boolean functions 
and characterize the relation between DLCT and the Walsh transform and differential distribution table.
Secondly we give some generic bounds about the differential-linear uniformity of vectorial Boolean functions,
study the invariance property of DLCT under the affine, EA and CCZ equivalence and exhaust the DLCT of optimal $4\times 4$ S-boxes. 
Thirdly, we further characterize properties of the DLCT of cryptographically desirable funcitons, including monomials, APN, plateaued and AB functions and convert the DLCT of APN and AB functions to the Walsh transform of certain Boolean functions. Finally, we investigate the DLCT spectra of some polynomials over $\gf_{2^n}$ in certain forms, such as the inverse, Gold and Bracken-Leander functions. 

With the importance of the new DLCT criterion of vectorial Boolean functions, our present work only covers a rather portion of interesting problems in this direction. We think that many problems deserve further research. For instance,  
the generic bounds of DLU of vectorial Boolean functions are significantly less than experimental results and should be further improved. A natural follow-up topic would be the investigation and construction of optimal, or near-optimal, vectorial Boolean functions with respect to the bounds.
}

\end{document}